\title{Self-Stabilizing Supervised Publish-Subscribe Systems\footnote{This work was partially supported by the German Research Foundation (DFG) within the Collaborative Research Center "On-The-Fly Computing" (SFB 901).}}
\titlerunning{Self-Stabilizing Supervised Publish-Subscribe Systems} 
\author[1]{Michael Feldmann}
\author[2]{Christina Kolb}
\author[3]{Christian Scheideler}
\author[4]{Thim Strothmann}
\affil[1]{Paderborn University, F\"urstenallee 11, 33102 Paderborn, Germany\\
  \texttt{mfeldma2@mail.upb.de}}
\affil[2]{Paderborn University, F\"urstenallee 11, 33102 Paderborn, Germany\\
  \texttt{ckolb@mail.upb.de}}
\affil[3]{Paderborn University, F\"urstenallee 11, 33102 Paderborn, Germany\\
  \texttt{scheideler@upb.de}}
\affil[4]{Paderborn University, F\"urstenallee 11, 33102 Paderborn, Germany\\
  \texttt{thim@mail.upb.de}}
\authorrunning{M. Feldmann, C. Kolb, C. Scheideler and T. Strothmann} 
\subjclass{C.2.4 Distributed Systems}
\keywords{Topological Self-stabilization, Publish-Subscribe, Supervised Overlay}
\theoremstyle{plain}
\newtheorem{claim}[theorem]{Claim}
\DeclareMathOperator*{\argmax}{argmax} 
\begin{document}

\maketitle

\begin{abstract}
In this paper we present two major results:
First, we introduce the first self-stabilizing version of a supervised overlay network (as introduced in~\cite{DBLP:conf/ispan/KothapalliS05}) by presenting a self-stabilizing supervised skip ring.
Secondly, we show how to use the self-stabilizing supervised skip ring to construct an efficient self-stabilizing publish-subscribe system.
That is, in addition to stabilizing the overlay network, every subscriber of a topic will eventually know all of the publications that have been issued so far for that topic. The communication work needed to processes a subscribe or unsubscribe operation is just a constant in a legitimate state, and the communication work of checking whether the system is still in a legitimate state is just a constant on expectation for the supervisor as well as any process in the system.
\end{abstract}

\section{Introduction}
The publish subscribe paradigm (\cite{DBLP:journals/csur/EugsterFGK03, DBLP:conf/sigmod/FabretJLPRS01}) is a very popular paradigm for the targeted dissemination of information. 
It allows clients to subscribe to certain topics or contents so that they will only receive information that matches their interests. 
In the traditional client-server approach the dissemination of information is handled by a server (also called broker), which has the benefit that the publishers are decoupled from the subscribers: the publisher does not have to know the relevant subscribers, and the publisher and subscribers do not have to be online at the same time. 
However, in this case the availability of the publish subscribe system critically depends on the availability of the server, and the server has to be powerful enough to handle the dissemination of the publish requests. 
An alternative approach is to use a peer-to-peer system. 
However, if no commonly known gateway is available, the peer-to-peer system cannot recover from overlay network partitions. 
In practice, peer-to-peer systems usually have a commonly known gateway since otherwise new peers may not be able to get in contact with a peer that is currently in the system (and can therefore process the join request). 
In our supervised overlay network approach we assume that there is a commonly known gateway, called \emph{supervisor}, that just handles subscribe and unsubscribe requests but does not handle the dissemination of publish requests, which will be handled by the subscribers in a peer-to-peer manner.
We are interested in realizing a \emph{topic-based} supervised publish subscribe system, which means that peers can subscribe to certain topics (that are usually relatively broad and predefined by the supervisor).

Topic-based publish subscribe systems have many important applications. 
Apart from providing a targeted news service, they can be used, for example, to realize a group communication service \cite{DBLP:journals/tocs/FeketeLS01}, which is considered an important building block for many other applications ranging from chat groups and collaborative working groups to online market places (where clients publish service requests), distributed file systems or transaction systems. 
To ensure the reliable dissemination of publish requests in a topic-based publish subscribe system, we present a self-stabilizing supervised publish subscribe system, which ensures that for any initial state (including overlay network partitions) eventually a legitimate state will be reached in which all subscribers of a topic know about all publish requests that have been issued for that topic. 
We also show that the overhead for the supervisor in our system is very low.
In fact, the message overhead of the supervisor is just a constant for subscribe and unsubscribe operations, and the supervisor has a low maintenance overhead in a legitimate state.

\subsection{Model}
We model the overlay network of a distributed system as a directed graph $G=(V,E)$, where $n = |V|$. Each peer is represented by a node $v \in V$. Each node $v \in V$ is identified by its unique \emph{ reference} or \emph{identifier}  $v.id \in \mathbb{N}$ (called \textit{ID}). Additionally, each node $v$ maintains local protocol-based variables and has a \textit{channel} $v.Ch$, which is a system-based variable that contains incoming messages. We assume a channel to be able to store any finite number of messages, and messages are never duplicated or get lost in the channel.
If a node $u$ has the reference of some other node $v$, $u$ can send a message $m$ to $v$ by putting $m$ into $v.Ch$.
There is a directed edge $(u,v) \in E$ whenever $u$ stores a reference of $v$ in its local memory or there is a message in $u.Ch$ carrying the reference of $v$. In the former case, we call that edge \textit{explicit} and in the latter case we call that edge \textit{implicit}. Note that every node is assumed to know the supervisor, and this information is read-only, so $G$ always contains a directed star graph from all peers to the supervisor.

Nodes may execute \textit{actions}: An action is just a standard procedure and has the form $\langle label \rangle(\langle parameters \rangle):\langle command \rangle$, where $label$ is the name of that action, $parameters$ defines the set of parameters and $command$ defines the statements that are executed when calling that action.
It may be called locally or remotely, i.e., every message that is sent to a node has the form $\langle label \rangle(\langle parameters \rangle)$.
When a node $u$ processes a message $m$, then $m$ is removed from $u.Ch$.
Additionally, there is an action that is not triggered by messages but is executed periodically by each node.
We call this action \textsc{Timeout}.

We define the \textit{system state} to be an assignment of a value to every node's variables and messages to each channel.
A \textit{computation} is an infinite sequence of system states, where the state $s_{i+1}$ can be reached from its previous state $s_i$ by executing an action that is enabled in $s_i$.
We call the first state of a given computation the \textit{initial state}.
We assume \textit{fair message receipt}, meaning that every message of the form $\langle label \rangle(\langle parameters \rangle)$ that is contained in some channel, is eventually processed.
Furthermore, we assume \textit{weakly fair action execution}, meaning that if an action is enabled in all but finitely many states of a computation, then this action is executed infinitely often.
Consider the \textsc{Timeout} action as an example for this.
We place no bounds on message propagation delay or relative node execution speed, i.e.,  we allow fully asynchronous computations and non-FIFO message delivery.
Our protocol does not manipulate node identifiers and thus only operates on them in \emph{compare-store-send} mode, i.e., the nodes are only allowed to compare node IDs, store them in a node's local memory or send them in a message.

In this paper we assume for simplicity that there are no corrupted IDs (i.e., IDs of unavailable nodes) in the initial state of the system. However, dealing with them is easy when having a failure detector that is eventually correct since, due to the supervisor, the correctness of our protocol cannot be endangered by sending messages to non-available nodes. Since our protocol just deals with IDs in a compare-store-send manner, this implies that node IDs will always be non-corrupted for all computations.
Nevertheless, the node channels may initially contain an arbitrary finite number of messages containing false information. We call these messages \textit{corrupted}, and we will argue that eventually there will not be any corrupted messages in the system. 
We will show that our protocol realizes a self-stabilizing supervised publish-subscribe system.

\begin{definition}[Self-stabilization] \label{def:self_stabilization}
A protocol is \emph{self-stabilizing} w.r.t. a set of legitimate states if it satisfies the following two properties:
\begin{itemize}
	\item[-] Convergence: Starting from an arbitrary system state, the protocol is guaranteed to arrive at a legitimate state.
	\item[-] Closure: Starting from a legitimate state, the protocol remains in legitimate states thereafter.
\end{itemize}
\end{definition}

\subsection{Related Work}
The concept of self-stabilizing algorithms for distributed systems goes back to the year 1974, when E. W. Dijkstra introduced the idea of self-stabilization in a token-based ring \cite{DBLP:journals/cacm/Dijkstra74}.
Many self-stabilizing protocols for various types of overlays have been proposed, like sorted lists~\cite{DBLP:conf/alenex/OnusRS07}, de Bruijn graphs~\cite{DBLP:conf/sss/RichaSS11}, Chord graphs~\cite{DBLP:journals/mst/KniesburgesKS14} and many more. 
There is even a universal approach, which is able to derive self-stabilizing protocols for several types of topologies \cite{DBLP:journals/tcs/BernsGP13}.

The cycle topology is particularly important for our work.
Our cycle protocol is based on \cite{DBLP:conf/ipps/KniesburgesKS12}, in which the authors construct a self-stabilizing cycle that acts as a base for additional long-range links, both together forming a small-world network.

The paper closest to our work is by Kothapalli and Scheideler~\cite{DBLP:conf/ispan/KothapalliS05}.
The authors provide a general framework for constructing a supervised peer-to-peer system in which the supervisor only has to store a constant amount of information about the system at any time and only has to send out a constant number of messages to integrate or remove a node.
However, their system is not self-stabilizing.

In the literature there are publish-subscribe systems that are self-stabilizing: e.g. in~\cite{DBLP:conf/europar/MuhlJHWUF05} the authors present different content-based routing algorithms in a self-stabilizing (acyclic) broker overlay network that clients can publish messages to.
Their main idea is a leasing mechanism for routing tables such that it is guaranteed that once a client subscribes to a topic there is a point in time such that every publication which is issued thereafter is delivered to the newly subscribed client (i.e., there are no guarantees for older publications).
While the authors focus on the routing tables and take the overlay network as a given ingredient, our work focuses on constructing a self-stabilizing supervised overlay network and then using it to obtain a self-stabilizing publish-subscribe system.

A self-stabilizing publish-subscribe system for wireless ad-hoc networks is proposed in~\cite{DBLP:conf/sss/SiegemundT16}, which builds upon the work of~\cite{DBLP:conf/kivs/SiegemundTM15}: Similar to our work, the authors arrange nodes in a cycle with shortcuts and present a routing algorithm that makes use of these shortcuts to deliver new publications for topics to subscribers only after $\mathcal O(n)$ steps.
Subscribe and unsubscribe requests are processed by updating the routing table at nodes.
Both systems described above differ from our approach, as they solely focus on the routing scheme and updates of the routing tables, while we focus on updating the topology upon subscribe/unsubscribe requests.
Additionally, our system is able to deliver publications in $\mathcal O(\log n)$ steps, if we use flooding, since we use a network with logarithmic diameter.
Furthermore, we are also able to deliver all publications of a domain to a new subscriber after only a constant number of rounds.

There is a close relationship between group communication services (e.g.,~\cite{DBLP:journals/tocs/FeketeLS01, DBLP:journals/tdsc/AmirNST05}) and publish-subscribe systems.
Processes are ordered in groups in both paradigms and group-messages are only distributed among all members of some group.
Self-stabilizing group communication services are proposed in~\cite{DBLP:journals/tmc/DolevSW06} for ad-hoc networks and in~\cite{DBLP:journals/acta/DolevS04} for directed networks.
However, there are some key differences: In group communication services, participants have to agree on group membership views.
This results in a high memory overhead for each member of a group, as nodes in a group technically form a clique.
On the other hand subscribers of topics in publish-subscribe systems are in general not interested in any other members of the topic.
For our approach, this results in logarithmic worst-case and constant average case degree for subscribers.

\subsection{Our Contribution}
To the best of our knowledge, we present the first self-stabilizing protocol for a supervised overlay network.
We focus on a topology that is a ring with shortcuts which we call \emph{skip ring}.
The corresponding protocol \textsc{BuildSR} is split up into two subprotocols: One protocol is executed at the supervisor (see Section~\ref{sec:supervisor}), 
the other one is executed by each subscriber (Section~\ref{sec:subscriber}).
Our basic protocol assumes that all references actually belong to existing nodes.
However, we also present an extension (see Section~\ref{sec:subscriber:failures}) to handle references to non-existing nodes and unannounced failures of nodes.
In contrast to the supervised overlay network proposed in~\cite{DBLP:conf/ispan/KothapalliS05}, our new protocol lets the supervisor handle multiple insertions/deletions in parallel without having to rely on confirmations from other nodes, however, at the cost of storing much more references than the solution in~\cite{DBLP:conf/ispan/KothapalliS05}.

The skip ring shares some similarities with other shortcut-based peer-to-peer systems like Chord networks~\cite{DBLP:journals/mst/KniesburgesKS14} or skip graphs~\cite{DBLP:conf/podc/JacobRSST09}.
However, our network has a better congestion than these networks, as the supervised approach allows a much more balanced distribution of these nodes.

We show how to use the supervised skip ring to obtain a self-stabilizing publish-subscribe system (see Section~\ref{sec:pub_sub_system}) in which each skip ring corresponds to a topic.
Every subscriber of a topic eventually gets all publications that have been issued so far for that topic. 
The shortcuts in the skip ring are helpful when using flooding to distribute new publications among all subscribers, since a skip ring of $n$ nodes has diameter $\log n$.

In our self-stabilizing publish-subscribe system the message overhead of the supervisor is linear in the number of topics (but not in the number of subscribers), if we use a simple generalization strategy in which each topic corresponds to one skip ring.
This, of course, decreases the applicability of our system in large-scale scenarios.
However, better scalability can be achieved by organizing topics in a hierarchical manner, or by having different supervisors for each topic.
For the latter scenario, one could make use of a self-stabilizing distributed hash table (with consistent hashing) for all supervisors, in which a sub-interval of $[0,1)$ is assigned to each supervisor.
By hashing IDs of topics  in the same manner, each supervisor is then only responsible for the topics in its sub-interval.
Since solutions for self-stabilizing distributed hash tables already exist in the literature (see e.g.~\cite{DBLP:journals/jacm/JacobRSST14}), we do not elaborate on them further in this paper.

\section{Preliminaries}\label{sec:preliminaries}
In this section we formally introduce the topology for a skip ring (Section~\ref{sec:preliminaries:son}).
As a base for our self-stabilizing protocol we introduce the \textsc{BuildRing} protocol from~\cite{DBLP:conf/ipps/KniesburgesKS12} to arrange nodes in a sorted ring (Section~\ref{sec:preliminaries:cycle}).

\subsection{Skip Ring} \label{sec:preliminaries:son}
Let $l: \mathbb{N}_0 \rightarrow \{0,1\}^*$ be a mapping with the property that $l(x) := (x_{d-1}\ldots  x_0 x_d)$ for every $x \in \mathbb{N}_0$ with binary representation $(x_d\ldots x_0)_2$ (where $d$ is minimum possible). 
Intuitively, $l$ takes the leading bit of the binary string representing the input value and moves it to the units place.
In our setting the supervisor will use $l$ to assign a (unique) label to each subscriber.
Labels are generated in the order: $0$, $1$, $01$, $11$, $001$, $011$, $101$, $111$, $0001...$.
Note that $l$ is invertible.
We call the value $l(x) \in \{0,1\}^*$ of $l$ a \emph{label}.
Denote by $|label|$ the minimum number of bits used to encode \emph{label}.
A label $y =(y_1 \ldots y_d) \in \{0,1\}^d$ may either be represented as a bit string or as a real-valued number within $[0,1)$ by evaluating the function $r:\{0,1\}^* \rightarrow [0,1)$ with $r(y) := \sum_{i=1}^d y_i/2^i$.
The function $r$ induces an ordering of all nodes in a ring, which will be used in the following to define the skip ring:

\begin{definition}[Skip Ring]\label{def:son}
	A \emph{skip ring} $SR(n)$ is a graph $G = (V, E_R \cup E_S)$ with $n$ nodes.
	$G$ is defined as follows:
	\begin{itemize}
		\item[-] Each node $v \in V$ has a unique label denoted by $label_v \in \{0,1\}^*$ with $l^{-1}(label_v) < n$.
		\item[-] $(u,v) \in E_R \Leftrightarrow (u,v)$ are consecutive in the ordering induced by $r$. Denote the edges in $E_R$ as \emph{ring edges}.
		\item[-] $(u,v) \in E_S \Leftrightarrow (u,v)$ is part of the sorted ring w.r.t. node labels over all nodes in $K_i$, $i \in \{1,\ldots, \lceil \log n \rceil - 1\}$, where $K_i := \{w \in V\ |\ |label_w| \leq i\}$. Denote $(u,v) \in E_S$ as a \emph{shortcut on level $i$}, if $i = \max\{|label_u|, |label_v|\}$.
	\end{itemize}
\end{definition}

The label of a node $v \in V$ is independent from its unique ID $v.id$ and will be determined by the supervisor.

The intuition behind $E_R$ and $E_S$ is that we want all nodes with label of length at most $k$ to form a (bidirected) sorted ring for all $k \in \{1,\ldots,\lceil \log n \rceil\}$.
For $k = \lceil \log n \rceil$ these edges are stored in $E_R$, for $k < \lceil \log n \rceil$ they are stored in $E_S$.
Due to the way we defined the function $l$ it holds that for all $x \in \{2^d,\ldots, 2^{d+1}-1\}$ the values $r(l(x))$ are uniformly spread in between old values $r(l(y))$  with $y \in \{0,\ldots,2^d-1\}$.
This implies that the longer a node is a participant of the system, the more shortcuts it has.
This makes sense from a practical point of view, since older and thus more reliable nodes hold more connectivity responsibility in form of more shortcuts.

The decision whether two nodes are connected or not only depends on the labels of the nodes, which means that a arrival/departure of a node only affects its neighbors (see Section~\ref{sec:pub_sub_system:sub_unsub} for details).
Figure~\ref{img:ring_example} illustrates  $SR(16)$.

\begin{figure}[ht]
	\centering
 	\includegraphics[scale=0.7]{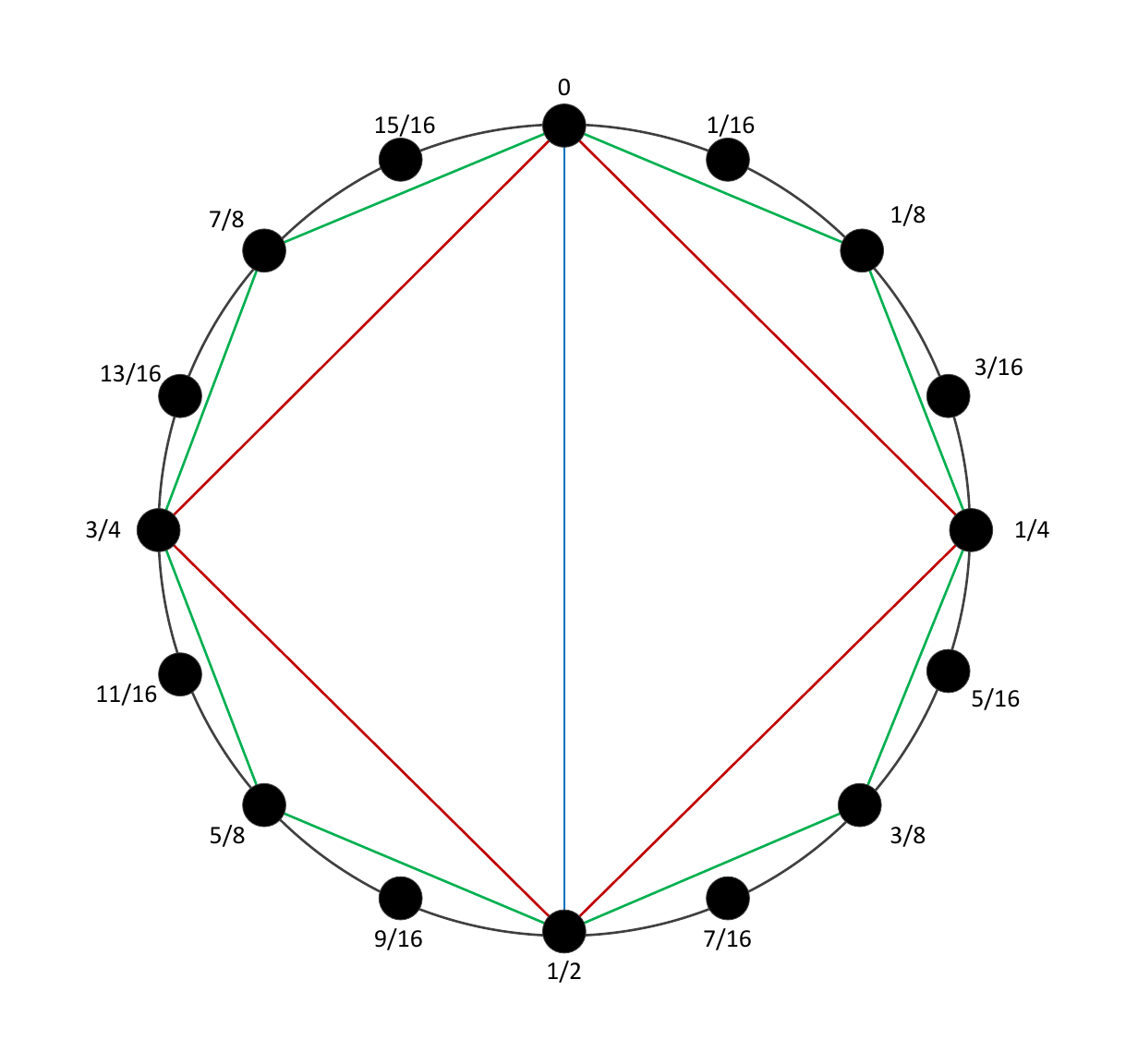}
	\caption{A skip ring consisting of 16 nodes. The triples are of the form $(x, l(x), r(l(x)))$ where $x \in \{0,\ldots,15\}$, $l(x)$ is the corresponding label and $r(l(x))$ is the real valued version of the label. Black edges are ring edges ($k = 4$), green edges are shortcuts for $k = 3$, red edges for level $k = 2$ and the blue edge is the shortcut for $k = 1$.}
	\label{img:ring_example}
\end{figure}
  
The following Lemma follows from the definition of $SR(n)$:

\begin{lemma}[Node Degree] \label{lemma:node_degree}
In a legitimate state, the degree of nodes in a skip ring is logarithmic in the worst case and constant in the average case.
\end{lemma}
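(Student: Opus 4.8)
The plan is to bound the degree of an individual node as a function of the length of its label, and then to aggregate these bounds over all $n$ nodes. First I would fix a node $v$ and set $k := |label_v|$. By Definition~\ref{def:son}, $v$ always contributes its two ring edges ($E_R$) to its immediate neighbours in the $r$-ordering, a constant contribution. For the shortcuts, the key observation is that $v \in K_i$ precisely when $|label_v| \le i$, i.e.\ for every level $i$ with $k \le i \le \lceil\log n\rceil-1$. In the sorted ring over $K_i$ the node $v$ has exactly one predecessor and one successor, so each of these $\lceil\log n\rceil - k$ levels contributes at most two incident shortcut edges. Hence
\[
\deg(v)\;\le\;2 + 2\bigl(\lceil\log n\rceil - k\bigr)\;=\;2\bigl(\lceil\log n\rceil - k + 1\bigr).
\]
(Successors or predecessors may coincide across levels, so this is only an upper bound, which is all I need.)

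For the worst case, the right-hand side is maximized at the smallest label length $k=1$, giving $\deg(v)\le 2\lceil\log n\rceil=\mathcal{O}(\log n)$; this is essentially tight for the two nodes of label length $1$, which participate in every level. This settles the logarithmic worst case. For the average case I would count how many nodes carry each label length. By the construction of $l$, a label has length $k$ exactly when the leading bit of its preimage sits at position $k-1$, so there are $2^{k-1}$ labels of length $k$ for $k\ge 2$ (and $2$ for $k=1$). Summing the per-node bound weighted by these multiplicities and normalizing by $n$ yields
\[
\frac{1}{n}\sum_{v}\deg(v)\;\le\;\frac{1}{n}\sum_{k=1}^{\lceil\log n\rceil}2^{k-1}\cdot 2\bigl(\lceil\log n\rceil - k + 1\bigr).
\]
Re-indexing by $j=\lceil\log n\rceil-k$ rewrites the normalized sum as $\tfrac{2^{\lceil\log n\rceil}}{n}$ (a factor at most $2$) times a prefix of the convergent series $\sum_{j\ge 0}(j+1)2^{-j}=4$. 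Thus the average degree is $\mathcal{O}(1)$, and since every node has degree at least $2$ from its ring edges it is in fact $\Theta(1)$.

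The hard part is conceptual rather than computational: one must recognize that although the shortest-labeled nodes are expensive (logarithmic degree), there are only a handful of them, whereas the overwhelming majority of nodes carry long labels and hence have only constant degree. The geometric decay $2^{k-1}$ in the number of short-labeled nodes exactly cancels the linear growth $\lceil\log n\rceil - k + 1$ in their degree once divided by $n=2^{\lceil\log n\rceil}$, which is what makes the weighted sum collapse to a constant; roughly half the nodes sit at the maximal level and contribute the bulk of the mass at constant degree. Care must also be taken with the boundary cases — label length $1$ producing two nodes rather than one, $E_R$ playing the role of the top level $K_{\lceil\log n\rceil}$, and $n$ not being a power of two — but each of these perturbs only the constants and not the $\mathcal{O}(\log n)$ and $\Theta(1)$ conclusions.
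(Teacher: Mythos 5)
Your proposal is correct and follows essentially the same route as the paper: bound the degree of a node with label length $k$ by $2(\lceil\log n\rceil - k + 1)$, use the fact that there are $2^{k-1}$ nodes of label length $k$ (two for $k=1$), and average the weighted sum over all $n$ nodes. The only difference is cosmetic: the paper evaluates the sum exactly to $4n-4$, while you bound it by the convergent series $\sum_{j\ge 0}(j+1)2^{-j}=4$, and you are somewhat more careful about ceilings and non-powers of two.
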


\begin{proof}
For convenience, we define $k:= |label_v|$.
Node $v$ has $2$ shortcuts to nodes with label of length $k'$ for each $k' \geq k$.
Having $n$ nodes in the system, we know that $k'$ is upper bounded by $\log(n)$, which sums up the degree of $v$ to be $2 \cdot (\log n - k + 1) = \mathcal O(\log n)$.

Next, we want to compute the average degree of a node.
We count the overall number of edges in a stable system containing $n$ subscribers.
Let $f(k)$ denote the number of subscribers with label of length $k$.
We have 
\begin{align*}
 f(k) &=
  \begin{cases}
   2        & k = 1 \\
   2^{k-1}  & k > 1
  \end{cases}
\end{align*}

Recall that the maximum length of a label is equal to $\log(n)$ in a stable state.
Combining this fact with the above formula for the node degree, we get the following result for the number of edges in $|E_R \cup E_S|$:
\begin{eqnarray*}
|E_R \cup E_S| &=& \sum_{k=1}^{\log(n)} f(k) (2(\log(n)-k+1))\\
&=& 4\log(n) + \sum_{k=2}^{\log(n)} 2^{k-1} (2(\log(n)-k+1))\\
&=& 2\log(n) + \sum_{k=1}^{\log(n)} 2^k (\log(n)-k+1)\\
&=& 2n \log(n) +2n - 2 - \sum_{k=1}^{\log(n)} 2^k k\\
&=& 4n - 4\\
\end{eqnarray*}

Dividing this value by $n$ yields an upper bound of $4 = \Theta(1)$ for the average node degree. 
\end{proof}

\subsection{Self-Stabilizing Ring} \label{sec:preliminaries:cycle}
The base of our self-stabilizing protocol is the \textsc{BuildRing} protocol from~\cite{DBLP:conf/ipps/KniesburgesKS12} that organizes all nodes in a sorted ring according to their labels, using linearization~\cite{DBLP:conf/alenex/OnusRS07}: Each node $v \in V$ stores edges to its closest left and right neighbors (denoted by $v.left, v.right \in V$) according to $v$'s label (denoted by $v.label$). 
Any other nodes $u$ are delegated by $v$ to either $v.left$ or $v.right$ (depending on which node is closer to $u$).
Additionally the node with minimum label stores an edge to the node with maximum label and vice versa, such that the sorted ring is closed.
Nodes $v$ periodically \emph{introduce} themselves to their neighbors $v.left$ and $v.right$ in the sorted ring: This means that $v$ sends a message to $v.left$/$v.right$ containing a reference to itself.
This way nodes can check, if the sorted ring is in a legitimate state from their point of view or not.

In our setting nodes may assume \emph{corrupted} labels for their neighboring nodes in any nonlegal state: If node $v \in V$ has an edge to $w \in V$, then $v$ locally stores the tuple $(label_w, w)$.
While the reference to $w$ is assumed to be correct by definition at any time, $w$'s variable $w.label$ may change to a different value at some point in time.
Unfortunately, $v$ still has the old label value associated with $w$, implying that $label_w \neq w.label$.
As a consequence, we extend the \textsc{BuildRing} protocol as follows: 
Whenever a node $v \in V$ introduces itself to another node $w \in V$, then $v$ informs $w$ about the label $label_w$ that $v$ thinks is assigned to $w$.
Node $w$ then checks the label for correctness by comparing $w.label$ with $label_w$, and if $label_w \neq w.label$, $w$ sends $v$ its correct value of $w.label$.

Including the modifications mentioned above, the extended \textsc{BuildRing} protocol is still self-stabilizing:

\begin{lemma}\label{lemma:build_cycle_stabilization}
	The \textsc{BuildRing} protocol with its extension is self-stabilizing.
\end{lemma}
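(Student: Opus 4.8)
The plan is to reduce the claim to the already-known self-stabilization of the unmodified \textsc{BuildRing} protocol from~\cite{DBLP:conf/ipps/KniesburgesKS12}. The only new ingredient is the label-correction handshake attached to each introduction, so it suffices to show two things: (i) starting from an arbitrary state, after finitely many steps every explicit edge $(v,w)$ carries the correct label, i.e. the locally stored value $label_w$ equals $w.label$; and (ii) once this holds, the extended protocol behaves exactly like the original one, so the original convergence and closure arguments carry over. Throughout, I would treat the label assignment produced by the supervisor as fixed, since the stabilization of the supervisor component is analyzed separately; in particular each node's value $w.label$ is a constant during the argument.

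First, for convergence, I would argue that the stored labels stabilize. Initially the channels contain an arbitrary but finite number of corrupted messages; by fair message receipt these are all eventually processed, and since the protocol only operates on identifiers in compare-store-send mode and never fabricates a new identifier, no node ever creates a reference to a nonexistent node. Hence all references are correct at every time and only the associated label values can be stale. I would then define a potential $\Phi$ equal to the number of pairs $(v,w)$ such that $v$ has an explicit edge to $w$ but $label_w \neq w.label$. Each node periodically introduces itself to $v.left$ and $v.right$ (\textsc{Timeout} is executed infinitely often by weakly fair action execution), and every such introduction now carries the label the sender believes the receiver has. Whenever a mismatch is detected the receiver returns its true $w.label$, strictly decreasing the contribution of that pair; since references are fixed and labels are constant, no action can ever increase $\Phi$. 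Thus $\Phi$ reaches $0$, after which every stored label is correct.

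Once all stored labels agree with the true labels, the label field sent with an introduction always matches, the correction branch is never taken, and the extra handshake degenerates into a no-op. From this point the execution is indistinguishable from a run of the unmodified \textsc{BuildRing} protocol on correct sorting keys, and the convergence of that protocol~\cite{DBLP:conf/ipps/KniesburgesKS12} drives the system into the sorted ring, establishing \emph{Convergence}. For \emph{Closure}, note that in a legitimate state all stored labels are already correct and the ring is sorted; the periodic introductions only reaffirm the existing left/right edges, the label checks confirm equality and trigger no corrections, and no corrupted messages remain, so the closure property of the original protocol is inherited unchanged.

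The main obstacle is the coupling between label correction and ring formation during the transient phase: a node delegates a foreign reference $u$ to $v.left$ or $v.right$ by comparing labels, so a stale label can cause a wrong delegation and thereby create spurious explicit edges before $\Phi$ has reached $0$. The key point that resolves this is that such wrong delegations only ever move \emph{correct references} around the graph and can at worst re-insert a pair into the count measured by $\Phi$; because the true labels are constant and the correction rule is the only action that writes a label value, there is no way to inject a genuinely new wrong label, so $\Phi$ remains nonincreasing and its eventual vanishing is not prevented by the ongoing linearization. Making this monotonicity argument fully rigorous — in particular checking that each individual action (introduce, delegate, correct, and the \textsc{Timeout} bookkeeping) cannot increase $\Phi$ — is the technically delicate part of the proof.
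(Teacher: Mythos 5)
Your overall strategy matches the paper's: first show that corrupted (stale) labels eventually vanish, then observe that from that point on the extended protocol is indistinguishable from the original \textsc{BuildRing} protocol of~\cite{DBLP:conf/ipps/KniesburgesKS12}, whose convergence and closure are inherited. However, your potential argument has two gaps, one definitional and one substantive. The definitional one: your $\Phi$ counts only \emph{explicit} edges, but the initial channels contain corrupted messages, and when such a message is delivered its pair $(label_w,w)$ may be \emph{stored} by the receiver, creating a new corrupted explicit edge and increasing $\Phi$. So $\Phi$ as defined is not non-increasing; you must count stale labels both in node memory and in transit (this is what the paper implicitly does when it speaks of the total number of corrupted labels being finite and non-duplicating).

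The substantive gap is the progress argument. Even granting that $\Phi$ never increases, "$\Phi$ is non-increasing and corrections strictly decrease it" does not yield "$\Phi$ reaches $0$": you must show that from any state with $\Phi>0$ a strict decrease is \emph{guaranteed} to occur. This is not immediate, because linearization keeps moving corrupted pairs around ($v$ delegates $(label_w,w)$ to a neighbor, which leaves $\Phi$ unchanged), and the correction handshake only helps if the stale pair is still stored at the sender when the reply with the true label arrives. A priori, a corrupted label could be forwarded forever, always one step ahead of any correction. This is exactly the role of the paper's \emph{trace} argument, which you are missing: for each variable such as $u.right$, the sequence of values it takes has monotonically decreasing labels and is drawn from a finite set, hence is finite; so the variable eventually stabilizes at some $(label_{v_k},v_k)$, and only then does the periodic introduce/check/reply cycle of the extension provably catch and repair the stale label, after which induction on the (finite, non-duplicating) number of corrupted labels finishes the proof. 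You explicitly flag the monotonicity bookkeeping as "the technically delicate part" but leave it unresolved; since that step is the crux of the lemma, the proposal as it stands is incomplete.
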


\begin{proof}
	In case that there are no corrupted labels, the extended \textsc{BuildRing} protocol behaves the same as the standard \textsc{BuildRing} protocol, so we refer the reader to \cite{DBLP:conf/ipps/KniesburgesKS12} to verify that the protocol is indeed self-stabilizing.
	
	We are going to show that in case there exist corrupted node labels, these will eventually vanish.
	W.l.o.g. consider the variable $u.right$ for a node $u$.
	We define the \emph{trace} of $u.right$ as the chain of values $((label_{v_1}, v_1)$, $ (label_{v_2}, v_2), \ldots)$ that are allocated to $u.right$ while the system stabilizes.
	By definition of the standard \textsc{BuildRing} protocol, labels stored in the trace for $u.right$ are monotonically decreasing.
	Furthermore, the trace is finite, since the number of labels (corrupted or correct) is finite.
	Let $(label_{v_k}, v_k)$ be the last node of this trace, i.e., eventually it holds $u.right = (label_{v_k}, v_k)$.
	Then $u$ will introduce itself in its \textsc{Timeout} method to $v_k$ by sending a message $M$ storing $u$ itself and $label_{v_k}$ to $v_k$.
	Upon receiving $M$, $v_k$ is able to check if $label_{v_k} = v_k.label$ and send a reply storing the (correct) label $v_k.label$ to $u$ in case $label_{v_k} \neq v_k.label$, s.t. $u$ corrects its label $label_{v_k}$.
	This implies that the number of corrupted labels is reduced, but there may be a new trace generated for $u.right$.
	But since the number of corrupted node labels is finite and is not duplicating, eventually, the overall number of corrupted node labels will reduce to $0$.
\end{proof}

\section{Self-stabilizing Supervised Skip Ring}\label{sec:protocol}
In this section we first extend the skip ring topology by introducing a supervisor.
The description of our \textsc{BuildSR} is then split into two sub-protocols: One sub-protocol is executed by the supervisor, the other one is executed by every other node.
Adapting publish-subscribe terminology, we denote a node $v \in V$ as \emph{subscriber} for the rest of the paper.

Recall that every subscriber $v$ is assumed to know the supervisor $s$, and this information is read-only, so the graph $G$ always contains edges $(v,s)$.
The assumption of having such a supervisor is not far-fetched, because even pure peer-to-peer systems need a common gateway that acts as an entrance point for peers.

Our goal in this section is to construct a self-stabilizing protocol in which subscribers form a skip ring with the help of the supervisor, starting from any initial state.
The extension to a self-stabilizing publish-subscribe system is then described in Section~\ref{sec:pub_sub_system}.

\subsection{Supervisor Protocol} \label{sec:supervisor}
The first part of the \textsc{BuildSR} protocol is executed by the supervisor.
The supervisor maintains the following variables:
\begin{description}
	\item[-] A $database \subset \{0,1\}^* \times V$ containing subscribers and their corresponding labels. Denote $n := |database|$.
	\item[-] A variable $next \in \mathbb{N}$ that is used to notify subscribers in a round-robin fashion.
\end{description}

In the supervisor's \textsc{Timeout} method, the supervisor chooses a subscriber $v$ in a round-robin fashion (using the variable $next$) from its $database$.
Then the supervisor sends a message to $v$ containing $v$'s label $label_v$, as well as the correct values for $v$'s predecessor $pred_v$ and successor $succ_v$ according to the $database$.
We call such a triple $(pred_v, label_v, succ_v)$ the \emph{configuration} for $v$.

In addition to the above action, the supervisor has to check the integrity of its $database$: We say that the $database$ of $s$ is \emph{corrupted}, if at least one of the following conditions hold:
	\begin{itemize}
		\item[(i)] There exists a tuple $(label,v) \in database$ with $v = \perp$. (There exists a tuple without any subscriber)
		\item[(ii)] There exist entries $(label_1, v_1), (label_2, v_2) \in database$ with $label_1 \neq label_2$ and $v_1 = v_2$. (There exist multiple tuples storing the same subscriber)
		\item[(iii)] There exists $i \in \{0,\ldots,n-1\}$, s.t. for all $(label_v,v) \in database$ it holds $label_v \neq l(i)$. (There are labels missing)
		\item[(iv)] There exists $i \geq n$, s.t. there is a tuple $(label,v) \in database$ with $label = l(i)$. (There exists a tuple with an incorrect label)
	\end{itemize}

All of these cases may occur in initial states.
Note that when using a hashmap for the $database$, we do not need to check explicitly whether there are multiple tuples with the same label or whether there are tuples with the label set to $\perp$.
We perform the following actions to tackle the above 4 cases:

\begin{itemize}
	\item[(i)] Upon detecting a tuple $(label, \perp) \in database$, we simply remove it from the $database$.
	\item[(ii)] Whenever a subscriber $v$ wants to unsubscribe or request its configuration, the supervisor first searches the database for all tuples $(l,w)$ with $w = v$.
	It then removes all duplicates except the tuple with lowest label, guaranteeing that $v$ is associated with no more than one label.
	\item[(iii)] In \textsc{Timeout} the supervisor checks for all $i \in \{0,\ldots,n-1\}$ if there is a tuple $(l(i), v)$ stored in the $database$.
	If not, then the supervisor takes the tuple $(l(j), w) \in database$ with maximum $j \in \mathbb{N},\ j > i$ and replaces its label with $l(i)$.
	\item[(iv)] It is easy to see that the action for (iii) also solves case (iv).
\end{itemize}

Observe that all of these actions are performed locally by the supervisor, i.e., they generate no messages. 
Therefore we assume that the $database$ of the supervisor is always in a non-corrupted state from this point on.

\subsection{Subscriber Protocol}\label{sec:subscriber}
In this section we discuss the part of the \textsc{BuildSR} protocol that is executed by each subscriber.
First, we present the variables needed for a subscriber.
Note that we intentionally omit the reference to the supervisor $s$ here, since $s$ is assumed to be hard-coded.
A subscriber $v \in V$ stores the following variables:
	\begin{itemize}
		\item[-] $v.label \in \{0,1\}^* \cup \{\perp\}$: The unique label of $v$ or $\perp$ if $v$ has not received a label yet.
		\item[-] $v.left, v.right, v.ring \in \{0,1\}^* \times (V \cup \{\perp\})$: Left and right neighbor in the ring as well as the cyclic connection in case $r(v.label)$ is minimal/maximal.
		\item[-] $v.shortcuts \subset \{0,1\}^* \times (V \cup \{\perp\})$: All of $v$'s shortcuts.
	\end{itemize}

For the rest of the protocol description, we use $v.left$ and $v.right$ to indicate $v$'s left (resp. right) neighbor in the ring even if the left (resp. right) neighbor is stored in $v.ring$ instead of $v.left$ (resp. $v.right$).
We also may refer to the variables $v.left, v.right, v.ring$ as $v$'s \emph{direct ring neighbors}.
Recall that each subscriber executes the extended \textsc{BuildRing} protocol from Section~\ref{sec:preliminaries:cycle}.

\subsubsection{Receiving correct Labels} \label{sec:subscriber:label_protocol}
For now we focus on the ring edges only.
Our first goal is to guarantee that every subscriber $v$ eventually stores its correct label in $v.label$.

Recall that we have periodic communication from the supervisor to the subscribers, i.e., the supervisor periodically sends out the configurations to all subscribers stored in its $database$.
This action alone does not suffice in order to make sure that every subscriber eventually stores its correct label, since in initial states the $database$ may be empty and subscriber labels may store arbitrary values.
Thus, we also need periodic communication from subscribers to the supervisor.
The challenge here is to not overload the supervisor with requests in legitimate states of the system.
Each subscriber $v$ periodically executes the following actions:

\begin{itemize}
	\item[(i)] If $v.label = \perp$, then $v$ asks the supervisor to integrate $v$ into the $database$ and send $v$ its correct configuration.
	\item[(ii)] If $v.label \neq \perp$, then, with probability $1/(2^k\cdot k^2)$, $v$ asks the supervisor for its correct configuration, where $k = |v.label|$.
\end{itemize}

Action (ii) is dedicated to handle subscribers that have incorrect labels or already store a label, but are not known to the supervisor.
Upon receiving a configuration request from a subscriber $v$, the supervisor integrates $v$ into the $database$ (if it is not already contained in the $database$) and sends $v$ its configuration and thus its correct label.

We still need some further actions to tackle special initial states: Imagine a subscriber $v$ having a label such that the probability mentioned in (ii) becomes negligible.
In case $v$ is not contained in the supervisor's $database$ yet, $v$ will send a configuration request to the supervisor with only very low probability.
The following action is able to solve this problem under the assumption that there exists a subscriber $w$ that is already contained in the supervisor's $database$ and has $v$ stored as one of its direct ring neighbors.

\begin{itemize}
	\item[(iii)] W.l.o.g. let $w.left = (label_v, v)$. If $w$ receives a configuration from the supervisor and $pred \neq v$, it checks whether $w.left$ is \emph{closer} than the left ring neighbor $(label_p, pred) \in \{0,1\} \times V$ proposed by the configuration, i.e., $|r(label_v) - r(w.label)| \leq  |r(label_p) - r(w.label)|$.
	In case this holds, $w$ requests the supervisor to send the correct configuration to $w.left$.
\end{itemize}

The assumption for action (iii) may not hold in all initial states, i.e., there is a connected component in which all subscribers have stored labels such that the probability mentioned in (ii) becomes negligible.
Note that actions (i)-(iii) suffice to show convergence in theory.
In order to improve the time it takes the network to converge, we introduce one last periodic action:

\begin{itemize}
	\item[(iv)] Subscriber $v$ periodically requests its configuration with probability $1/2$ from the supervisor if it determines, based only on its local information, that its label is minimal.
\end{itemize}

We now sketch why eventually all subscribers in a connected component $C$ get their correct label.
This is obviously the case when all subscribers in $C$ are stored in the supervisor's $database$ as the supervisor will then periodically hand out the correct labels in a round-robin fashion.
Denote a subscriber that is already stored in the supervisor's $database$ as \emph{recorded}.
Action (iv) guarantees that we quickly have at least one recorded subscriber in a connected component $C$.
Assume that $C$ still contains non-recorded subscribers.
As long as the supervisor is able to introduce new recorded subscribers to recorded subscribers in $C$, $C$'s size grows, but since the number of subscribers is finite, $C$ will eventually become static.
For such a static connected component $C$ we know that due to \textsc{BuildRing}, subscribers in $C$ eventually form a sorted ring.
Then there exists at least one ring edge from a recorded subscriber $v \in C$ to a non-recorded subscriber $w \in C$.
Furthermore, $v$'s correct ring neighbor $w'$ indicated by its configuration has to be further away from $v$ than $w$ (so for instance $r(v.label) < r(w.label) < r(w'.label)$ if we consider the right neighbor of $v$).
This holds because no new subscriber can be introduced to a recorded subscriber in $C$.
Once $v$ receives its configuration from the supervisor, it triggers action (iii) and requests the configuration for $w$, leading to $w$ being inserted into the supervisor's $database$ and thus reducing the number of non-recorded subscribers in $C$ by one.
This inductively implies that eventually all subscribers in $C$ are recorded.
We now want to bound the expected number of requests that are periodically sent out to the supervisor when the system is in a legitimate state.
For the next lemma, denote a \emph{timeout interval} as the time in which every subscriber has called its \textsc{Timeout} method exactly once.

\begin{theorem} \label{lemma:supscriber_to_supervisor_communication}
Consider a supervised skip ring with $n$ subscribers in a legitimate state.
The expected number of configuration requests sent out by all subscribers is less than $1$ in each timeout interval.
\end{theorem}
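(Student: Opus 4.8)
My plan is first to determine which of the periodic actions (i)--(iv) can still emit a configuration request once the system is legitimate, and then to bound the expected number of emitted requests per timeout interval by summing the individual firing probabilities over all $n$ subscribers. I would argue that in a legitimate state only actions (ii) and (iv) can fire. Action (i) is guarded by $v.label = \perp$, which never holds in a legitimate state, so it is disabled. Action (iii) fires only when a subscriber $w$ receives a configuration whose proposed predecessor disagrees with the direct ring neighbor $w$ already stores and is strictly farther from $w$; by closure the supervisor hands out exactly the configurations matching the current ring, so the guard $pred \neq v$ together with the distance test never succeeds. This leaves (ii) and (iv) as the only sources of requests.

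Action (iv) is the easy part: in a legitimate state the values $r(\cdot)$ place the subscribers on a sorted ring, so exactly one subscriber (the one whose label has $r$-value $0$) locally sees its label as minimal. Thus at most one subscriber runs (iv) per timeout interval, for an expected contribution of $1/2$; I would account for this minimal subscriber only through (iv) to avoid counting it twice in the next step.

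For action (ii) I would group subscribers by label length $k = |v.label|$ and reuse the counting function from Lemma~\ref{lemma:node_degree}: there are $f(1) = 2$ subscribers at length $1$ and $f(k) = 2^{k-1}$ at each length $1 < k \le \lceil \log n \rceil$. Since a length-$k$ subscriber fires (ii) independently with probability $1/(2^k k^2)$, the expected number of (ii)-requests equals $\sum_{k=1}^{\lceil \log n \rceil} f(k)/(2^k k^2)$. The key cancellation is $f(k)/2^k = 1/2$ for $k > 1$, which reduces each level's contribution to $1/(2k^2)$ and makes the tail dominated by the convergent series $\tfrac{1}{2}\sum_{k \ge 1} 1/k^2$; the quadratic factor $k^2$ in the denominator (as opposed to a factor depending on $n$) is precisely the design choice that keeps this sum finite and independent of the system size. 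I expect the main obstacle to be the small-$k$ bookkeeping: the short labels fire with constant probability (up to $1/2$ at $k=1$), so these few levels dominate the whole estimate, and one has to combine them with the $1/2$ charged to action (iv) carefully — treating action (ii) and action (iv) \emph{jointly} rather than separately, and pinning down the special handling of the minimal subscriber — in order to certify that the total stays strictly below $1$.
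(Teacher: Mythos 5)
Your setup is more faithful to the protocol than the paper's own proof: you correctly note that action (iv) still fires in a legitimate state (the subscriber with $r$-value $0$ believes its label minimal and requests with probability $1/2$, per the pseudocode), and you correctly use $f(1)=2$, $f(k)=2^{k-1}$ for $k>1$. The gap is precisely in the step you defer to the end: the ``careful'' joint treatment of (ii) and (iv) that is supposed to certify a total strictly below $1$ does not exist. Under your own accounting, the minimal subscriber contributes $1/2$ via (iv), the other length-$1$ subscriber (label $1$) contributes $1/(2^1\cdot 1^2)=1/2$ via (ii), and each level $k\ge 2$ contributes $2^{k-1}\cdot\frac{1}{2^k k^2}=\frac{1}{2k^2}>0$. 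Already for $n=2$ the total is exactly $1$, for $n\ge 4$ it is at least $1+\frac18>1$, and as $n$ grows it approaches $1+\left(\frac{\pi^2}{12}-\frac12\right)\approx 1.32$. So no bookkeeping can push the sum below $1$: carried out honestly, your argument refutes the stated bound rather than proving it.

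The paper reaches ``less than $1$'' only by dropping the (iv) term: its proof asserts that in a legitimate state ``only the second action (ii) is executed by subscribers'' and then computes $\sum_{k=1}^{\log n} 2^{k-1}\cdot\frac{1}{2^k k^2}=\sum_{k=1}^{\log n}\frac{1}{2k^2}<\frac{\pi^2}{12}<1$ (counting $2^{k-1}$ subscribers at level $1$ is consistent with excluding the minimal node from action (ii) --- but that is exactly the node that executes (iv) instead, whose probability-$\frac12$ request is then never charged anywhere). Since the pseudocode unambiguously lets that node request its configuration with probability $1/2$ even in a legitimate state, the tension you flagged is a genuine flaw in the theorem/proof as written, not a small-$k$ subtlety you failed to handle. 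To turn your proposal into a valid proof you would have to either modify the protocol so that action (iv) is disabled once the subscriber's stored configuration is consistent (which the paper does not do), or weaken the claim to an $O(1)$ expected number of requests (e.g., less than $2$), which your computation does establish.
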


\begin{proof}
Since the system has $n$ subscribers, the maximum length of a subscriber's label is equal to $\log(n)$ in a legitimate state.
In a legitimate state, only the second action (ii) is executed by subscribers, as all subscribers have stored their correct configuration.
Thus, requests are only sent from a subscriber $v$ to the supervisor with probability based on $v$'s label length $|v.label|$.
The number of subscribers with label of length $k$ is equal to $2^{k-1}$ and the probability that a subscriber with label of length $k$ contacts the supervisor in its \textsc{Timeout} procedure is equal to $1/(2^k\cdot k^2)$.
It follows that the expected number of configuration requests sent out by subscribers with label of length $k$ is equal to $\sum_{i=1}^{2^{k-1}} 1/(2^k \cdot k^2) = 1/(2k^2)$.
In summary, the expected number of configuration requests that are sent out by all subscribers is equal to $\sum_{k=1}^{\log(n)}1/(2k^2) < 1$.
\end{proof}

\subsubsection{Maintaining Shortcuts}\label{sec:subscriber:shortcuts}
In this section we describe how subscribers establish and maintain shortcut edges.
Recall that we have shortcuts on levels $k = \{1,\ldots,\lceil \log n \rceil\}$, where $k = \lceil \log n \rceil$ represents the ring edges that are already established.
A subscriber $v$ with label of length $k = |v.label|$ has exactly 2 shortcuts on each level in $\{k,\ldots,\lceil \log n \rceil\}$ in a legitimate state.

We first describe how a subscriber is able to compute all its shortcut labels locally, based only on the information of its left and right direct ring neighbors.
The following approach only computes the respective labels in $[0,1)$ that a node should have shortcuts to, but not the subscribers that are associated with these labels.
The idea is the following: In general, a subscriber $v \in V$ has only shortcuts to other subscribers that lie on the same semicircle as $v$, i.e., either the semicircle of subscribers within the interval $[0,1/2]$ or the semicircle of subscribers within the interval $[1/2, 1]$ (where the $1$ is represented by the subscriber with label $0$).
Consider a subscriber $v$ with $r(v.label) \in [0,1)$ and its two ring neighbors $w, u$ such that $v.left = (label_w, w)$ and $v.right = (label_u, u)$.
If $v$ recognizes that $|v.label| < |label_w|$, then $v$ knows that it has to have a shortcut with label $s$ and $r(s) = 2\cdot r(label_w) - r(v.label)$, because node $w$ was previously inserted between subscribers with labels $s$ and $v.label$.
After this, $v$ can apply this method recursively, i.e., it checks for the computed label $s$ if $|v.label| < |s|$ until it reaches a label of less or equal length.
This same procedure is applied analogously for $v.right$.

As an example, recall the (stable) ring from Figure~\ref{img:ring_example}.
Suppose we want to compute all shortcut labels for the subscriber with (real-valued) label $1/4$, based only on the labels of its direct ring neighbors, which are $3/16$ and $5/16$.
We know that the label $3/16$ has length $4$, which is greater than the length of label $1/4$, which is $2$.
Thus, we get a shortcut $s_1$ for $1/4$ with label $2\cdot(3/16) - 1/4 = 1/8$.
The label $1/8$ has length $3$, which is still greater than $2$.
Hence we compute a shortcut $s_2$ with label $2\cdot(1/8) - 1/4 = 0$.
Finally we know that the length of label $0$ is $1$, which is smaller than $2$, which terminates the algorithm.
The computation of shortcut labels to $3/8$ and $1/2$ works analogously.

We are now ready to define the self-stabilizing protocol that establishes and maintains shortcuts for all subscribers.
Consider a subscriber $v$ with label length $|v.label| = k$.
On \textsc{Timeout}, $v$ checks if $v.shortcuts$ contains subscribers $(label_u, u), (label_w, w)$ on level $k$.
If that is the case, then $v$ introduces $u$ to $w$, by sending a message to $w$ containing the reference of $u$ as well as $u$'s label $label_u$.
Also, $v$ introduces $w$ to $u$ in the same manner.
Note that for $|v.label| = \lceil \log n \rceil$, $v$ has to consider its two ring neighbors instead of $v.shortcuts$.
On receipt of such an introduction message consisting of the pair $(label_w, w)$, $u$ checks if it has a shortcut $(label_{w'}, w')$ with $label_{w'} = label_w$.
If that is the case, then $u$ replaces the existing node reference $w'$ by $w$ and, if $w' \neq w$, forwards the reference of $w'$ on the sorted ring via the \textsc{BuildRing} protocol.
This way it is guaranteed that shortcuts are established in a bottom-up fashion.

\subsection{Handling Subscriber Failures}\label{sec:subscriber:failures}
We now consider the case that subscribers $v \in V$ are allowed to \emph{crash} without warning.
In this case the address $v.id$ ceases to exist.
Consequently, even though nodes may still send messages to $v$, these messages do not invoke any action on $v$.
Note that we do not consider supervisor failure, since it is assumed to be hard-coded.
The challenge here is to restore the system to a correct supervised skip ring that does not contain $v$, i.e., we need to exclude $v$ from the system.
In pure peer-to-peer systems this scenario is a problem, since we have to maintain failure detectors~\cite{DBLP:journals/jacm/ChandraT96} at each node in order to be able to determine if some neighboring node has crashed.
This leads to an increased overhead in the complete system.
However, in our setting it suffices to establish only one single failure detector at the supervisor, because we only need to make sure that the $database$ will eventually contain the correct data.
Consequently, if the supervisor notices that subscriber $v$ has crashed, it just has to remove $v$ from its $database$.
By periodically executing the actions for restoring a corrupted $database$ we know that the $database$ will eventually contain the correct data.

\section{Self-Stabilizing Publish-Subscribe System}\label{sec:pub_sub_system}
In this section we show how to use our \textsc{BuildSR} protocol as a self-stabilizing publish-subscribe system.
We start by discussing some general modifications and then describe the operations \emph{subscribe}, \emph{unsubscribe} and \emph{publish}.

Let $\mathcal T \subset \mathbb{N}$ be the set of available topics that one may subscribe to.
To construct a publish-subscribe system out of our self-stabilizing supervised overlay network, we basically run a \textsc{BuildSR} protocol for each topic $t \in \mathcal T$ at the supervisor.
Thus, the supervisor has to extend its $database$ to be in $(\{0,1\}^*)^{|\mathcal T|} \times V$.
From here on, we assume that each message contains the topic it refers to, such that the receiver of such a message can match it to the respective \textsc{BuildSR} protocol.
Once a subscriber wants to subscribe to some topic $t \in \mathcal T$, it starts running a new \textsc{BuildSR} protocol for topic $t$.
Upon unsubscribing, the subscriber may remove the respective \textsc{BuildSR} protocol, once it gets the permission from the supervisor, implying that the supervisor has removed the subscriber from its $database$.
By assigning the topic number to each message that is sent out, we can identify the appropriate protocol at the receiver.
For convenience, we still consider only one supervised skip ring for the rest of the paper.

\subsection{Subscribe/Unsubscribe} \label{sec:pub_sub_system:sub_unsub}
When processing a $subscribe(v)$ operation, the supervisor executes the following actions (denote by $n$ the number of nodes in the $database$ before the subscribe/unsubscribe request):

\begin{enumerate}
	\item Insert $(l(n), v)$ into the $database$.
	\item Send $v$ its correct configuration $(pred_v, l(n), succ_v)$.
\end{enumerate}

The correctness of subscribe follows immediately, since our protocol is self-stabilizing.
Note that the supervisor can easily extract the tuples $pred_v$ and $succ_v$ from the $database$, since all tuples are sorted based on the value of their labels.
Our approach has the advantage that it spreads multiple sequential subscribe operations through the skip ring, meaning that a pre-existing subscriber is involved (i.e., it has to change its configuration) only for two consecutive subscribe operations. Afterwards its configuration remains untouched until the number of subscribers has doubled.
This is due to the definition of the label function $l$.
As an example consider the skip ring $SR(16)$ from Figure~\ref{img:ring_example} and assume that there are $16$ new subscribers that want to join.
Then these new subscribers are inserted in between consecutive pairs of old subscribers on the ring, as they receive (real-valued) labels $1/32$, $3/32$, $5/32, \ldots$.

When processing an $unsubscribe(v)$ operation, the supervisor executes the following actions:

\begin{enumerate}
	\item Remove $(label_v, v)$ from the $database$.
	\item Get the tuple $(label_w, w)$ with $label_w = l(n-1)$ from the $database$ and replace $label_w$ with $v$'s label $label_v$ in the $database$.
	\item Send $w$ its new configuration $(pred_v, label_v, succ_v)$.
	\item Inform $v$ that it is granted permission to delete all its connections to other subscribers.
\end{enumerate}

After both subscribers have received their correct label from the supervisor, the ring will stabilize itself.
Note that the supervisor's $database$ is already in a legitimate state after the initial $subscribe$ (resp. $unsubscribe$) message has been processed by the supervisor.
Therefore, the supervisor does not rely on additional information from subscribers to stabilize its $database$.
The following lemma states the correctness of \emph{unsubscribe}.

\begin{lemma} \label{lemma:unsubscribe}
	After a subscriber $v$ has sent an $unsubscribe(v)$ request to the supervisor, $v$ eventually gets disconnected from the graph induced by $E_R \cup E_S$.
\end{lemma}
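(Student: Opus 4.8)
The plan is to reduce the claim to the self-stabilization of the underlying ring and shortcut machinery, isolating the single nontrivial point, namely that every reference to $v$ held elsewhere in the system eventually disappears. First I would record the effect of processing $unsubscribe(v)$. By the discussion preceding the lemma, immediately after the supervisor handles the request its $database$ is already in a legitimate state: it contains exactly the $n-1$ remaining subscribers with consecutive labels $l(0),\ldots,l(n-2)$, in particular $v$ no longer appears, while the former maximum-label node $w$ now carries $v$'s old label $label_v$ together with $v$'s neighbors $pred_v,succ_v$. Step~4 of the procedure grants $v$ permission to delete all of its outgoing connections, so once $v$ acts on this message there are no explicit edges leaving $v$, and since $v$ thereafter ceases to introduce itself it will never create a fresh edge pointing back to itself.

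Next I would invoke the self-stabilization already established for the remaining protocol. Because the $database$ is legitimate and omits $v$, the supervisor hands out, in round-robin fashion, correct configurations to the $n-1$ surviving subscribers, none of which names $v$. By Lemma~\ref{lemma:build_cycle_stabilization} the extended \textsc{BuildRing} protocol run by these subscribers is self-stabilizing, and the shortcut protocol of Section~\ref{sec:subscriber:shortcuts} rebuilds shortcuts bottom-up from the stabilized ring; hence the surviving nodes converge to the skip ring $SR(n-1)$ in which $w$ occupies the position formerly held by $v$ and inherits its shortcuts. It then remains only to argue that the stale references to $v$ present in the post-unsubscribe state, both explicit edges stored by other subscribers and implicit edges carried in channels, in the ring as well as in the $v.shortcuts$ sets of other nodes, are all eliminated.

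This last point is where the real work lies. I would track such a reference $(label_v,v)$ as it is handled by the linearization rule: any node that does not have $v$ as a correct neighbor delegates it toward the ring position determined by $label_v=l(i)$, which is now occupied by $w$. The crucial observation is that $v$ and $w$ share the label $l(i)$, but only $w$ is confirmed by the supervisor and actively introduces itself, whereas $v$ no longer participates and never answers a message. Hence when the delegated reference reaches $w$ or an immediate ring neighbor of $w$, the extended label-correction mechanism treats $v$ as a duplicate occupant of $w$'s position; the conflict is resolved in favour of the participating node $w$, and the reference to the silent node $v$ is dropped instead of being re-delegated. I expect the main obstacle to be making this ``absorption'' argument watertight, in particular ruling out that a reference to $v$ can circulate indefinitely between the two neighbors of position $l(i)$, which is precisely the difficulty already handled by the reference-cleanup for non-existing nodes in Section~\ref{sec:subscriber:failures}. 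Once every reference to $v$ has vanished, $v$ has neither incoming nor outgoing edges in $E_R\cup E_S$, so $v$ is disconnected from the graph induced by $E_R\cup E_S$, as claimed.
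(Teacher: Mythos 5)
Your proposal has a genuine gap: it mischaracterizes the post-unsubscribe behaviour of $v$ and, as a consequence, has to invent a cleanup mechanism that the protocol does not contain. In the paper's protocol, $v$ does \emph{not} go silent after unsubscribing. The supervisor answers the $unsubscribe(v)$ request by sending $v$ the configuration $(\perp,\perp,\perp)$; $v$ then sets $v.label = \perp$ but remains a live participant, and the subscriber code (the $u.label = \perp$ branches of \textsc{Linearize} and \textsc{Introduce} in Algorithms~\ref{algorithm:build_list} and~\ref{algorithm:build_cycle}) makes $v$ answer every incoming introduction message with a \textsc{RemoveConnections} request. Since every subscriber $u$ holding an edge $(u,v) \in E_R \cup E_S$ periodically introduces itself to $v$ under \textsc{BuildSR}, each such $u$ eventually receives this reply and deletes its reference to $v$. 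That is essentially the entire proof in the paper: the incoming edges vanish because $v$ itself tells their holders to drop them.

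Your argument instead asserts that ``$v$ no longer participates and never answers a message'' and then tries to eliminate stale references to $v$ by an ``absorption'' rule: a delegated reference $(label_v, v)$ reaching the ring position now occupied by $w$ is dropped in favour of the participating node $w$. No such rule exists in the protocol: \textsc{Linearize} only stores or delegates references, it never discards them, and \textsc{IntroduceShortcut} re-linearizes a displaced reference rather than dropping it. Worse, with $v$ silent the stale reference can genuinely persist: since $w$ inherits exactly the label $label_v$, a node holding $(label_v, v)$ has no strictly closer candidate to delegate it to, and $w$ itself, upon receiving $(label_v, v)$ via linearization, may adopt $v$ as a ring neighbour --- an edge that would then never be removed if $v$ does not reply. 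Your appeal to Section~\ref{sec:subscriber:failures} does not close this hole: that section handles crashed nodes solely by a failure detector at the supervisor that cleans the $database$; it provides no cleanup of dangling references held by subscribers, and it is not applicable here precisely because $v$ has not crashed --- which is exactly the fact the paper's proof exploits.
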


\begin{proof}
	Assume that subscriber $v \in V$ sends an $unsubscribe(v)$ request to the supervisor.
	By definition of the unsubscribe protocol, the supervisor removes $v$ from its $database$ and sends $v$ its configuration that is $(\perp, \perp, \perp)$.
	Hence, $v$ sets $v.label = \perp$ and answers all incoming introduction messages from other subscribers with the request to delete the connection to $v$.
	By definition of \textsc{BuildSR}, every subscriber $u$ that has an edge $(u, v) \in E_R \cup E_S$ will eventually introduce itself to $v$, leading to $v$ eventually getting disconnected from the graph induced by $E_R \cup E_S$.
	This proves the lemma.
\end{proof}

It follows from the above descriptions that the supervisor only has to send out a constant number of messages per subscribe/unsubscribe request:

\begin{theorem}
	In a legitimate state, the message overhead of the supervisor and subscribers is constant for subscribe/unsubscribe operations.
\end{theorem}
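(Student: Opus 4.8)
The plan is to prove the two halves of the statement separately: first bound the number of messages the supervisor emits while processing a single subscribe or unsubscribe request, and then bound the number of messages generated by the affected subscribers. The supervisor bound is essentially a matter of reading off the protocol specifications of Section~\ref{sec:pub_sub_system:sub_unsub}. For a $subscribe(v)$ request, only step~2 produces a message --- the single configuration sent to $v$ --- while step~1 is a purely local $database$ insertion. For an $unsubscribe(v)$ request, steps~1 and~2 are local $database$ manipulations, and only steps~3 and~4 emit messages, namely the new configuration to $w$ and the deletion permission to $v$. Hence the supervisor sends at most one message per subscribe and at most two per unsubscribe, which is constant.

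For the subscriber side, I would argue that in a legitimate state each operation forces only a constant number of pre-existing subscribers to change their direct ring neighbors, and that each such change costs $O(1)$ messages under the extended \textsc{BuildRing} protocol. For a subscribe, $v$ receives the longest label $l(n)$ and is therefore inserted on the ring level strictly between its two consecutive neighbors $pred_v$ and $succ_v$; consequently $v$ establishes its two ring edges while only $pred_v$ and $succ_v$ need to replace the single ring edge they previously shared, each via a constant number of \textsc{BuildRing} introductions. For an unsubscribe, the relabelling of $w$ to $v$'s position means that only the constant-sized set consisting of $v$, $w$, the neighbors $pred_v, succ_v$, and $w$'s two former ring neighbors must adjust their ring edges. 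The locality claim --- that a pre-existing subscriber changes its configuration only for two consecutive operations before its configuration is frozen until the population doubles --- follows from the definition of the label function $l$, exactly as sketched in the discussion preceding Lemma~\ref{lemma:unsubscribe}.

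The main subtlety, and the step I expect to require the most care, concerns the shortcut edges rather than the ring edges. When a short-label node unsubscribes and $w$ inherits its label, $w$ must acquire shortcuts on every level from $|v.label|$ up to $\lceil \log n \rceil$, and symmetrically all endpoints of $v$'s former shortcuts must redirect to $w$; naively this looks like $\Theta(\log n)$ work. The resolution I would present is that the subscribe/unsubscribe operation as specified in Section~\ref{sec:pub_sub_system:sub_unsub} touches only the configuration, i.e.\ the ring edges, and never sends shortcut messages directly. All shortcut (re-)establishment is delegated to the separate periodic maintenance protocol of Section~\ref{sec:subscriber:shortcuts}, which rebuilds shortcuts bottom-up during subsequent \textsc{Timeout} intervals. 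Consequently the cost of restoring shortcuts is charged not to the operation but to ongoing maintenance, whose per-node budget is already controlled by the node-degree bound of Lemma~\ref{lemma:node_degree} ($O(\log n)$ in the worst case, $\Theta(1)$ on average). Combining the constant supervisor bound with the constant number of ring-edge adjustments then yields the claimed constant message overhead per operation, completing the proof.
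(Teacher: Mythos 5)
Your proposal is correct, and its first half is exactly the paper's argument: the paper gives no explicit proof at all, merely observing that ``it follows from the above descriptions'' that the supervisor emits one message per subscribe and two per unsubscribe, which is what you read off from the steps in Section~\ref{sec:pub_sub_system:sub_unsub}. Where you genuinely go beyond the paper is the subscriber side and, in particular, the shortcut accounting. The theorem as stated also claims constant overhead for \emph{subscribers}, and a naive reading runs into exactly the objection you raise: a relabelled node must (re-)acquire up to $\Theta(\log n)$ shortcuts. Your resolution --- that the operation itself only touches configurations/ring edges, while shortcut repair rides on the periodic introduction messages of Section~\ref{sec:subscriber:shortcuts}, which are sent in every \textsc{Timeout} interval whether or not the system is legitimate, so the operation does not increase the per-node per-interval message count --- is sound, and it matches the cost model the paper uses implicitly (constant operation cost versus constant expected maintenance cost, as announced in the abstract). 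The citation of Lemma~\ref{lemma:node_degree} is only needed to bound how much state the maintenance eventually repairs, not the message count, so it is a supporting remark rather than a load-bearing step. In short: the supervisor half coincides with the paper; the subscriber half fills a gap the paper silently skips over, and the charging argument is the right way to fill it.
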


\subsection{Publish}\label{sec:pub_sub_system:pub}
In the following paragraphs we extend our protocol to be able to provide \emph{publish} operations in a self-stabilizing manner.
Note that the presented approach is used \emph{only} to realize a self-stabilizing publication-dissemination-approach.
There exist dedicated protocols (e.g. flooding, see Section~\ref{app:publish:flooding}) that realize a more efficient distribution of publications among the subscribers.
A self-stabilizing protocol for publications is able to correct eventual mistakes that occurred in the flooding approach.
For storing publications at each subscriber, we use an extended version of a Patricia trie~\cite{DBLP:journals/jacm/Morrison68} to effectively determine missing publications at subscribers.
We first define the Patricia trie and later on present a protocol that is able to merge all publications in all Patricia tries.
This results in each subscriber storing all publications.

A \emph{trie} is a search tree with node set $T$ over the alphabet $\Sigma = \{0,1\}$. Every edge is associated with a label $c \in \Sigma$.
Additionally, every key $x \in \Sigma^k$ that has been inserted into the trie can be reached from the root of the trie by following the unique path of length $k$ whose concatenated edge labels result in $x$.

A \emph{Patricia trie} is a compressed trie in which all chains (i.e., maximal sequences of nodes of degree 1) are merged into a single edge whose label is equal to the concatenation of the labels of the merged trie edges.
We store a Patricia trie at each subscriber $v \in V$, denoted by $v.T$.
Each leaf node in a Patricia trie stores a publication $p \in \mathcal P^*$, where $\mathcal P = \{0,1\}$ is the alphabet for publications.
Note that each inner node $t \in T$ of a Patricia trie $T$ has exactly 2 child nodes denoted by $c_1(t), c_2(t) \in T$.
Furthermore, we want to assign a label to each node: The label $t.label \in \Sigma^k$ of an inner node $t \in T$ is defined as the longest common prefix of the labels of $t$'s child nodes (with $\perp$ being the empty word).
If $t$ is a leaf node storing a publication $p \in \mathcal P^*$, we define $t$'s label to be the unique key generated by the collision-resistant hash function $\bar{h}_m: \mathbb{N} \times \mathcal P^* \rightarrow \{0,1\}^m$, where a pair $(v.id, p) \in \mathbb{N} \times \mathcal P^*$ contains the unique ID of the subscriber $v \in V$ that generated the publication $p$.
Note that the constant $m \in \mathbb{N}$ and the hash function $\bar{h}_m$ are known to all subscribers, ensuring that every label for a publication has the same length.

In addition to node labels, we let nodes store (unique) hash values: We use another collision-resistant hash function $h: \{0,1\}^* \rightarrow \{0,1\}^*$ and define the hash value $t.hash$ of a leaf node $t$ as $h(t.label)$.
If $t$ is an inner node, then $t.hash$ is defined as the hash of the concatenation of the hashes of $t$'s child nodes, i.e., $t.hash := h(h(c_1(t).label) \circ h(c_2(t).label))$.
This approach is similar to a Merkle-Hash Tree (MHT)~\cite{DBLP:conf/crypto/Merkle87}, which also hashes data using a collision-resistant hash function and building a tree on these hashes.
However, our approach does not require one-way hash functions, which is a standard assumption in MHTs, because we do not require our scheme to be cryptographically secure.

If a subscriber $v \in V$ wants to publish a message $p \in \mathcal P^*$ over the ring, $v$ just inserts $p$ into its own Patricia trie.
The publication $p$ is then spread among all subscribers of the ring by the following protocol, executed at each subscriber $v \in V$: Subscriber $v$ periodically sends a request \Call{CheckTrie}{$v$, $r_v$} to one of its ring neighbors (chosen randomly) containing $v$ itself and the root node $r_v \in v.T$ of $v$'s Patricia trie.
Note that sending an arbitrary node $t \in v.T$ along a message \textsc{CheckTrie} means that we only store $t.label$ and $t.hash$ in the request while ignoring $t$'s outgoing edges.
Upon receiving a request \Call{CheckTrie}{$v$, $t_v$} with $t_v \in v.T$, a subscriber $u \in V$ does the following: It searches for the node $t_u \in u.T$ with label $t_u.label = t_v.label$ and checks if $t_u.hash = t_v.hash$.
The following three cases may happen:

\begin{itemize}
	\item[(i)] $t_u.hash = t_v.hash$: Then we know that the set of publications stored in the subtrie of $u.T$ with root node $t_u$ are the same as the set of publications stored in the subtrie of $v.T$ with root node $t_v$.
	Subscriber $u$ does not send any response to $v$ in this case.
	\item[(ii)] $t_u.hash \neq t_v.hash$: Then the contents of the subtries with roots $t_u, t_v$ differ in at least one publication.
	In order to detect the exact location, where both Patricia tries differ, $u$ responds to $v$ by sending a request \Call{CheckTrie}{$u$, $c_1(t_u)$, $c_2(t_u)$} to $v$, which is handled by $v$ as two separate \textsc{CheckTrie} requests \Call{CheckTrie}{$u$, $c_1(t_u)$} and \Call{CheckTrie}{$u$, $c_2(t_u)$}.
	\item[(iii)] $t_v$ does not exist in $u.T$: Then $v.T$ contains publications that do not exist in $u.T$.
	Subscriber $u$ is able to compute the label prefix of those missing publications: First, $u$ searches for the node $c \in u.T$ with label prefix $t_v.label$ and $|c.label|$ minimal, i.e., $c.label = t_v.label \circ b_1 \circ \ldots \circ b_k$ with $b_1, \ldots,  b_k \in \{0,1\}$ and $|c.label| = |t_v.label| + k$ minimal.
	If such a node $c$ exists, then $u.T$ may contain at least all publications with label  prefix $c.label$.
	Furthermore, $u$ knows that all publications with label prefix $t_v.label \circ (1-b_1)$ are missing in its Patricia trie.
	As a consequence, $u$ requests $v$ to continue checking the subtrie with root node of label $c.label$ and to deliver all publications with label prefix $p = t_v.label \circ (1-b_1)$ to $u$.
	It does so by sending a \Call{CheckAndPublish}{$u$, $c$, $p$} request to $v$, where $v$ internally calls \Call{CheckTrie}{$u$, $c$} and, in addition, delivers all publications with label prefix $p$ to $u$.
	In case that a node $c$ as described above cannot be found in $u.T$, $u$ just requests $v$ to deliver all publications with prefix $t_v.label$ to $u$, since that subtrie is missing in $u.T$.
\end{itemize}

With this approach, only those publications are sent out that are assumed to be missing at the receiver.

As an example consider two subscribers $u, v \in V$ with Patricia tries as shown in Figure~\ref{img:partricia_trie_ss_example}.
Note that $P_4$ is missing in $v.T$.
We describe how $v$ will eventually receive $P_4$.

\begin{figure*}[ht]
	\centering
 	\includegraphics[scale=0.9]{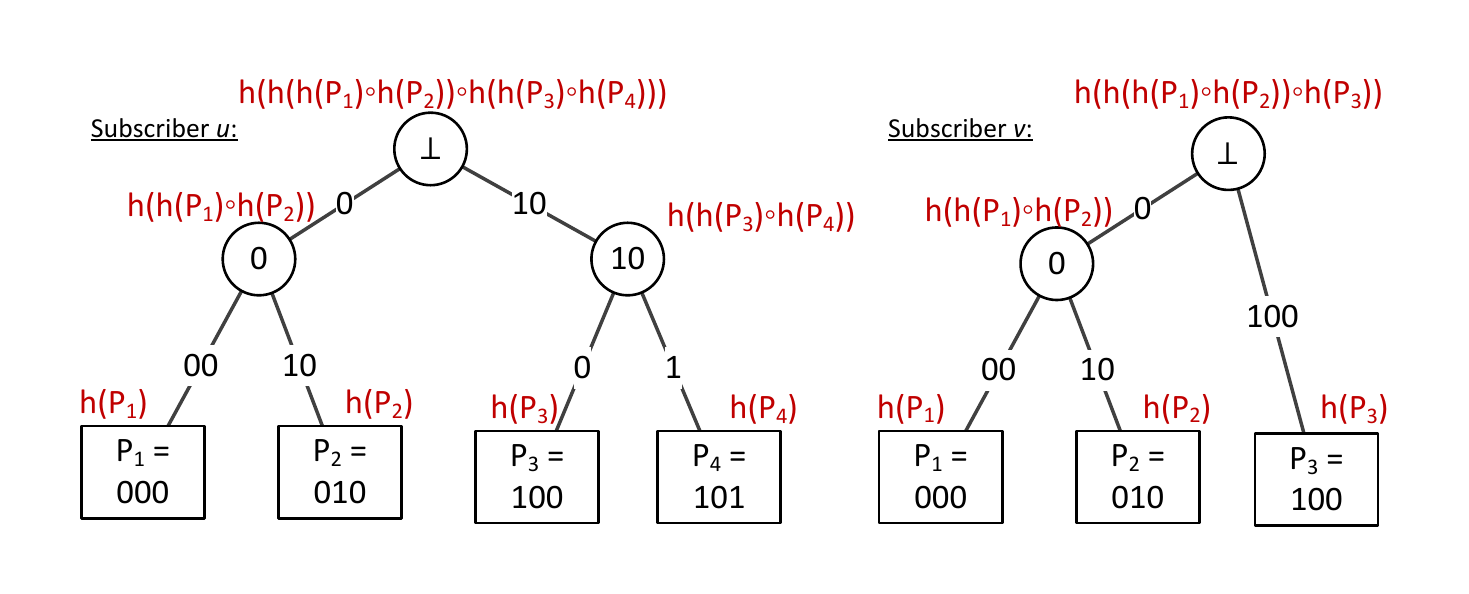}
	\caption{Example Patricia tries $u.T$ and $v.T$ for two subscribers $u, v \in V$.}
	\label{img:partricia_trie_ss_example}
\end{figure*}

First assume that $u$ sends out a \Call{CheckTrie}{$u$, $r_u$} message to $v$ in its \textsc{Timeout} method with $r_u$ being the root node of $u$'s Patricia trie.
Subscriber $v$ then compares the hash $r_u.hash$ with the hash of its root node, which is not equal.
Thus, $v$ sends a message \Call{CheckTrie}{$v$, $(0, h(h(P_1)\circ h(P_2))), (100, h(P_3))$} to $u$, which forces $u$ to compare the hashes the nodes with labels $0$ resp. $100$ to the hashes $h(h(P_1)\circ h(P_2))$ resp. $h(P_3)$.
Both comparisons result in the hashes being equal, which ends the chain of messages at subscriber $u$.

Now assume, that $v$ sends out a \Call{CheckTrie}{$v$, $r_v$} message to $u$ in its \textsc{Timeout} method with $r_v$ being the root node of $v$'s Patricia trie.
Subscriber $u$ compares the $r_v.hash$ with $r_u.hash$ and spots a difference.
Thus, it sends a message \Call{CheckTrie}{$u$, $(0, h(h(P_1)\circ h(P_2))), (10, h(h(P_3)\circ h(P_4)))$} to $v$.
For the node with label $0$ this results in both hashes being equal, but then $v$ cannot find a node with label $10$ in its Patricia trie, which is why $v$ sends a message \Call{CheckAndPublish}{$v$, $(100, h(P_3))$, $p=101$} to $u$.
Note that the node with label $100$ is the node with label of minimum length for which $10$ is a prefix.
Thus, $p = (10 \circ (1-0)) = 101$.
The \textsc{CheckAndPublish} request forces $u$ to compare the hashes of its node with label $100$ to the given hash $h(P_3)$, which results in both hashes being equal.
Furthermore, $u$ sends all publications with labels of prefix $101$ to $v$, which is only the publication $P_4$.
After $v$ has inserted $P_4$ in its Patricia trie, both tries are equal, resulting in two equal root hashes.

The example shows that it is important at which subscriber the initial \textsc{CheckTrie} request is started.

\subsection{Flooding} \label{app:publish:flooding}
As an extension to the above approach, we can make use of shortcuts to spread new publications over the ring: Whenever a subscriber $u \in V$ generates a new publication $p$, $u$ inserts $p$ into $u.T$ and broadcasts $p$ over the ring, by sending a \Call{PublishNew}{$p$} message to all of its neighbors $v$ with $(u,v) \in E_R \cup E_S$.
Upon receiving such a \Call{PublishNew}{$p$} message, a subscriber $v \in V$ checks if $p$ is already stored in $v.T$.
If not, then $v$ inserts $p$ into $v.T$ and continues to broadcast $p$ by forwarding the \textsc{PublishNew} message to its neighbors.
In case that $p$ is already stored in $v.T$, $v$ just drops the message.
By applying this flooding approach on top of the self-stabilizing publish protocol, we can achieve faster delivery of new publications in practice (recall that the skip ring has a diameter of $\log n$).
Still, if a new subscriber joins a topic, it has to rely on the core \textsc{BuildSR} protocol to receive all publications.
Furthermore, note that we do not rely on flooding to show convergence of publications.

\section{Analysis}\label{sec:analysis}
In this section we show that \textsc{BuildSR} is self-stabilizing according to Definition~\ref{def:self_stabilization}.
We also show that eventually all subscribers are storing all publications in their respective Patricia tries.
The combination of the first two theorems yields that \textsc{BuildSR} is self-stabilizing:

\begin{theorem}[Network Convergence]\label{theorem:network_convergence}
	Given any initially weakly connected graph $G = (V, E_R \cup E_S)$ with $n$ nodes, \textsc{BuildSR} transforms $G$ into a skip ring $SR(n)$.
\end{theorem}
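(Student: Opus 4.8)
The plan is to decompose the convergence argument into layers that mirror the structure of \textsc{BuildSR}, establishing them in the order in which they stabilize: first the supervisor's $database$, then the labels held by the subscribers, then the ring edges $E_R$, and finally the shortcut edges $E_S$. Throughout, weak connectivity together with the hard-coded edge from every subscriber to the supervisor guarantees that a subscriber never loses the ability to reach the supervisor, which is what drives the whole process. I would begin by arguing that the $database$ converges to the correct state: by Section~\ref{sec:supervisor} the four corruption-repair actions (i)--(iv) are purely local and generate no messages, and they run on every \textsc{Timeout}, so after finitely many timeouts the $database$ is non-corrupted, containing exactly the recorded subscribers with distinct labels $l(0),\ldots,l(m-1)$ for the current count $m$. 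I would also note that the finitely many corrupted messages present initially cause only finitely much damage and are never regenerated once the protocol behaves correctly, so they eventually vanish, as promised in the model.

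The heart of the proof is the second layer: every subscriber eventually becomes recorded and stores its correct label. I would formalize the sketch from Section~\ref{sec:subscriber:label_protocol} as an induction on the number of non-recorded subscribers. The base case uses action (iv) to ensure that some subscriber of the (single, by weak connectivity) component becomes recorded. For the inductive step I would argue that the component the subscribers induce under \textsc{BuildRing} eventually stops gaining newly recorded members -- the subscriber set being finite -- and hence becomes \emph{static}; by Lemma~\ref{lemma:build_cycle_stabilization} the subscribers in a static component form a sorted ring, so there must be a ring edge from a recorded subscriber $v$ to a non-recorded subscriber $w$ whose correct configuration neighbor is strictly farther from $v$ than $w$ is. When $v$ next receives its configuration, action (iii) fires and forces $w$ into the $database$, strictly decreasing the count of non-recorded subscribers. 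Iterating, all subscribers become recorded, after which the supervisor's periodic round-robin delivery hands each one its correct label.

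With all labels correct and the $database$ stable, the remaining two layers follow more mechanically. Holding the labels fixed, Lemma~\ref{lemma:build_cycle_stabilization} yields that the ring edges $E_R$ converge to the sorted ring required by Definition~\ref{def:son}. Given the correct ring edges, I would then invoke the shortcut-maintenance protocol of Section~\ref{sec:subscriber:shortcuts}: the recursive local computation of shortcut labels from the two direct ring neighbors is correct (as in the worked example), and the introduction mechanism installs the actual references level by level in bottom-up fashion, so that once the ring on level $k$ is correct the shortcuts on level $k-1$ are established. Inducting downward from $k=\lceil\log n\rceil$ to $k=1$ produces all of $E_S$, and assembling the four layers shows that $G$ equals $SR(n)$.

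The step I expect to be the main obstacle is the inductive label-propagation argument of the second paragraph, specifically the claim that a connected component becomes static and that staticness forces a recorded/non-recorded ring edge triggering action (iii). The delicate point is that \textsc{BuildRing}'s linearization continually rewires the explicit graph while the supervisor concurrently updates labels, so I must carefully separate the timescale on which the component stabilizes from the timescale on which labels propagate, and rule out the pathological initial states -- those in which the probability in action (ii) is negligible -- by appealing to the deterministic actions (iii) and (iv) rather than to the probabilistic action (ii).
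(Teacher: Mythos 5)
Your proposal follows essentially the same route as the paper's proof: it decomposes convergence into the same four layers (supervisor $database$ validity, all subscribers becoming recorded via the static-component/action-(iii) argument, ring convergence from correct configurations, and a downward induction over levels for the shortcuts), and your induction on the number of non-recorded subscribers is exactly the paper's potential-function argument in its Supervisor Convergence lemma. The obstacle you flag at the end is precisely the point the paper resolves the same way you suggest, so the proposal is correct and matches the paper's proof.
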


\begin{proof}
First of all, note that eventually all corrupted messages are received.
Furthermore, a corrupted message cannot trigger an infinite chain of corrupted messages, i.e., eventually the false information is either corrected or received but not spread anymore.
We assume this fact for the rest of the proof.

We start with the supervisor and prove the following lemma:

\begin{lemma}[Supervisor Validity] \label{lemma:database_convergence}
	Eventually the supervisor's $database$ will not be corrupted anymore.
\end{lemma}

\begin{proof}
	We show for every condition that may occur in a corrupted $database$ that the condition does not occur anymore at some point in time.
	\begin{itemize}
		\item[(i)] Assume that there exists a tuple $(label,v) \in database$ with $v = \perp$.
		Then this tuple is simply deleted from the database (see Algorithm~\ref{algorithm:supervisor}, line ~\ref{algorithm:database_integrity:algline:checkLabels_remove_empty_tuple}).
		\item[(ii)] Assume that there are entries $(label_1, v_1), (label_2, v_2) \in database$, $label_1 \neq label_2$ and $v_1 = v_2$.
		At some point in time the supervisor wants to send the configuration to $v_1$  (Algorithm~\ref{algorithm:supervisor}, line~\ref{algorithm:supervisor:algline:timeout_sendConfig}), which leads to the supervisor calling \Call{CheckMultipleCopies}{$v_1$}.
		Therefore, the supervisor is able to detect the entries $(label_1, v_1)$, $(label_2, v_2)$.
		W.l.o.g. assume that $(label_1, v_1)$ is detected before $(label_2, v_2)$ in the for-loop of \textsc{CheckMultipleCopies}.
		Then the supervisor just removes the redundant tuple $(label_2, v_2)$ from the $database$, which resolves this condition.
		\item[(iii)] Assume there exists $i \in \{0,\ldots,n-1\}$, such that for all $(label_v,v) \in database$ it holds $label_v \neq l(i)$.
		In this case, the supervisor is able to detect this corruptness:
		It asks for the entry with label $l(i)$ in the $database$ (Algorithm~\ref{algorithm:supervisor}, line~\ref{algorithm:database_integrity:algline:checkLabels_getli}).
		Then the supervisor proceeds as follows:
		It replaces the label of the entry $(l(j), w)$ with maximum $j > i$ with the label $l(i)$ (see Algorithm~\ref{algorithm:supervisor}, line ~\ref{algorithm:database_integrity:algline:checkLabels_replace}).
		Note that $(l(j), w)$ has to exist in this case, since otherwise $i \geq |database| = n$, which is a contradiction.
		\item[(iv)] Finally, assume that there exists $i \geq n$, s.t. there is a tuple $(label,v) \in database$ with $label = l(i)$.
		Then there exists a number $j \in \{0,\ldots,n-1\}$, for which there is no tuple $(l(j), w) \in database$, so eventually $(label,v)$ gets its label changed to $(l(j), v)$.
	\end{itemize}
	In the end we showed that all 4 conditions that may occur in a corrupted $database$ do not occur in the $database$ from some point in time on, which concludes the proof.
\end{proof}

Using this lemma, we are ready to prove the convergence of the supervisor:

\begin{lemma} [Supervisor Convergence]\label{lemma:supervisor_convergence}
	After the supervisor's $database$ has reached a non-corrupted state, all subscribers $v \in V$ will eventually become recorded.
\end{lemma}

\begin{proof}
Recall that we call a subscriber \emph{recorded}, if it is stored in the supervisors's $database$.
Note that the supervisor does not remove subscribers from its (non-corrupted) $database$ unless it gets the request to do so, which is not the case as we do not consider \textsc{unsubscribe} operations here.

Let $v \in V$ be a non-recorded subscriber.
Then either $v.label = \perp$ or $v.label \neq \perp$.
Recall the actions described in Section~\ref{sec:subscriber:label_protocol}.
If $v.label = \perp$, then $v$ requests its configuration from the supervisor via action (i) and becomes recorded.

Now let $v.label \neq \perp$.
If $v$ does not have connections to other subscribers (i.e., $v.left = v.right = \perp$), then $v$ requests its configuration from the supervisor via action (iv), as $v$ thinks that it is the subscriber with minimal label, so $v$ becomes recorded.

It remains to consider the general case, where we are given a connected component $C$ of subscribers.
Action (iv) guarantees that $C$ quickly contains at least one subscriber that is recorded.
As long as the supervisor is able to introduce new recorded subscribers to recorded subscribers in $C$, $C$'s size grows, but since the number of subscribers is finite, $C$ will eventually become static.
We show that for such a static connected component $C$, eventually all subscribers will become recorded: Consider the potential function \[\Phi(C) = |\{v \in C\ |\ v \text{ is non-recorded}\}|.\]
We show that eventually, $\Phi(C) = 0$.
Following the same argumentation from above, it is easy to see that $\Phi$ is never increasing.
Let $\Phi(C) = c$ for an arbitrary constant $c > 0$.
Then all subscribers in $C$ eventually form a sorted ring due to \textsc{BuildRing}.
This implies that there exists a ring edge $(v, w)$ from a subscriber $v$ that is already contained in the supervisor's $database$ to a subscriber $w$ that is not yet contained in the supervisor's $database$.
W.l.o.g. let $v.right = w$.
Since $w$ is not know to the supervisor, the $database$ has to contain a different right neighbor for $v$.
Let $w' \in C, w' \neq w$ be this neighbor, i.e., as the supervisor sends $v$ its correct configuration, it tells $v$ that $w'$ should be its right ring neighbor.
But then it has to hold $r(v) < r(w) < r(w')$.
Note that $r(v) < r(w') < r(w)$ cannot hold, since this contradicts the fact that subscribers in $C$ have already arranged themselves in a sorted ring and no new subscriber can be introduced to a recorded subscriber in $C$, as $C$ is already static.
This implies that upon receiving its configuration, $v$ triggers action (iii) and requests the configuration for $w$ at the supervisor, reducing $\Phi$ by one and finishing the proof.
\end{proof}

Having the supervisor's $database$ converged, we know that the ring of subscribers eventually converges:

\begin{lemma}[Ring Convergence] \label{lemma:config_convergence}
	After each subscriber $v \in V$ has stored its correct configuration from the supervisor the ring induced by edges $E_R$ has converged.
\end{lemma}

\begin{proof}
The supervisor periodically sends the correct configuration to each subscriber in a round-robin fashion (Algorithm~\ref{algorithm:supervisor}, line~\ref{algorithm:supervisor:algline:timeout_sendConfig}).
This implies that after $n$ calls of the supervisor's \textsc{Timeout} procedure, each subscriber has stored its correct label.
Note that this does not necessarily include the correct ring neighbors: A subscriber $v \in V$ may have received its configuration $(pred_v, label_v, succ_v)$ from the supervisor, but the subscriber $u$ stored in $pred_v$ (resp. $succ_v$) may not yet.
This may result in $v$ modifying $v.left = pred_v$ via \textsc{BuildRing}, because $u$ may not have received its correct label.
Since at least all labels are correct now, each subscriber receives its configuration from the supervisor the second time and does not change its list neighbors anymore.
\end{proof}

Finally, we need to prove the convergence of the shortcuts for all subscribers:

\begin{lemma}[Shortcut Convergence]\label{lemma:shortcut_convergence}
	Assume that each subscriber $v \in V$ has its correct configuration stored.
	Then all correct shortcut links will eventually be established at some point in time for all subscribers $v \in V$.
\end{lemma}

\begin{proof}
Using Lemma~\ref{lemma:config_convergence}, we assume that the sorted ring induced by edges $E_R$ is already correctly built.
We perform an induction over the levels $i \in \{1,\ldots, \lceil \log n \rceil\}$ of shortcuts and show that all shortcuts on each level are eventually established.
The induction base ($i = \lceil \log n \rceil$) trivially holds, as shortcuts on level $\lceil \log n \rceil$ are ring edges in $E_R$.
For the induction hypothesis, assume that all shortcuts on level $i$ have already been established, i.e., all nodes in $K_i := \{w \in V\ |\ |label_w| \leq i\}$ already form a sorted ring (recall Definition~\ref{def:son}).
In the induction step we show that all shortcuts on level $i-1$ are eventually established.
It is easy to see that $K_{i-1} \subset K_i$ holds.
Denote the sorted ring over nodes in $K_i$ as $R_i$.
Observer that each node $v \in K_i \setminus K_{i-1}$ has two neighbors $u, w$ in $R_i$ with $u, w \in K_{i-1}$.
Thus, by definition of our protocol, $v$ eventually introduces $u$ to $w$ and vice versa when calling its \textsc{Timeout} method.
This implies that the shortcuts $(u,w)$ and $(w,u)$ are established.
The above argumentation implies that the ring $R_{i-1}$ is established eventually, which concludes the induction.
\end{proof}

Having shown the convergence of the supervisor (Lemma~\ref{lemma:supervisor_convergence}), the sorted ring for all subscribers (Lemma~\ref{lemma:config_convergence}) and the convergence of the shortcuts for all subscribers (Lemma~\ref{lemma:shortcut_convergence}), we have proved the convergence of the overall system (Theorem~\ref{theorem:network_convergence}).
\end{proof}

\begin{theorem}[Network Closure] \label{theorem:network_closure}
	If the explicit edges in $G = (V, E_R \cup E_S)$ already form a supervised skip ring $SR(n)$, then they are preserved at any point in time if no subscribers join or leave the system.
\end{theorem}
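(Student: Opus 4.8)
The plan is to prove closure by examining every action the protocol may execute in a legitimate state and verifying that none of them alters an explicit edge---that is, none of the stored references $v.left, v.right, v.ring, v.shortcuts$ of any subscriber, nor the supervisor's $database$. Since by Definition~\ref{def:son} a legitimate state is determined entirely by the explicit edges forming $SR(n)$, it suffices to show that each enabled action maps a legitimate state to a legitimate state; implicit edges created by in-flight messages are irrelevant to the claim.

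First I would argue that the supervisor's $database$ is invariant. By assumption no subscribe/unsubscribe requests arrive, so the only supervisor actions are the round-robin configuration dispatch and the integrity checks (i)--(iv). In a legitimate state the $database$ is non-corrupted, hence none of the four corruption conditions hold and the integrity checks perform no modification; the configuration dispatch is read-only. Consequently the supervisor keeps sending, for every subscriber $v$, exactly the configuration $(pred_v, label_v, succ_v)$ that matches $v$'s current state.

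Next I would handle the subscribers' label and ring actions. Since every $v.label \neq \perp$, action (i) is never enabled; actions (ii) and (iv) may still fire probabilistically, but each merely requests a configuration that the supervisor returns unchanged, so upon receipt $v$ overwrites its variables with identical values. Action (iii) compares the proposed neighbor with the stored one and, because the stored neighbor already equals the correct (and hence closest) ring neighbor, triggers no further request. For the underlying \textsc{BuildRing} protocol, the sorted ring is already consistent, so when a node introduces itself to $v.left$ or $v.right$ the recipient finds the sender is already its closest neighbor on that side and performs no replacement or delegation; by Lemma~\ref{lemma:build_cycle_stabilization} no label correction is needed either.

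Finally I would treat the shortcut maintenance and publish protocols. On \textsc{Timeout} a subscriber introduces the two endpoints of each of its shortcut levels to one another; because all shortcuts are already correct, a recipient $u$ handling an introduction $(label_w, w)$ finds a matching shortcut $(label_{w'}, w')$ with $w' = w$, so it neither changes a reference nor forwards anything. The publish operations (\textsc{CheckTrie}, \textsc{CheckAndPublish}, flooding) only read ring and shortcut references and modify Patricia tries, never $E_R \cup E_S$. The main obstacle is precisely this case analysis: one must confirm that every message handler is idempotent on legitimate-state inputs---i.e.\ that feeding a node information it already holds provokes a compare-store step that leaves the reference unchanged and emits no corrective message---so that the probabilistic configuration requests of actions (ii) and (iv), which do keep occurring, cannot perturb the topology.
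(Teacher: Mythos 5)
Your proposal is correct and follows essentially the same route as the paper: the paper also decomposes closure into three parts (supervisor $database$, ring edges $E_R$, shortcut edges $E_S$) and verifies for each that no enabled action in a legitimate state modifies the explicit edges, invoking the closure of the extended \textsc{BuildRing} protocol (Lemma~\ref{lemma:build_cycle_stabilization}) and the fact that configurations and shortcut introductions merely re-deliver information the recipients already store. Your additional observations (the integrity checks being no-ops on a non-corrupted $database$, the idempotence of actions (ii) and (iv), and the publish protocol never touching $E_R \cup E_S$) are just finer-grained instances of the same case analysis.
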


\begin{proof}
We need to show closure for the supervisor's $database$ as well as for the skip ring.
Again, we start at the supervisor:

\begin{lemma}[Supervisor Closure] \label{lemma:closure:supervisor}
	If the explicit edges in $SR(n)$ already form a supervised skip ring, then the supervisor's $database$ does not get modified anymore, if no subscriber joins or leaves the system.
\end{lemma}

\begin{proof}
	The supervisor's $database$ is only modified, if \textsc{Subscribe} requests for new subscribers arrive at $s$ or a subscriber $v \in V$ unsubscribes by sending an \textsc{Unsubscribe} request to $s$.
	Both scenarios are forbidden by assumption of the lemma.
\end{proof}

\begin{lemma}[Ring Closure] \label{lemma:closure:domain_cycle}
	If the explicit edges in $SR(n)$ already form a supervised skip ring, then the set $E_R$ does not get modified anymore, if no subscriber joins or leaves the system.
\end{lemma}

\begin{proof}
	Messages that are generated by the extended \textsc{BuildRing} protocol do not modify the edge set $E_R$, since closure of the extended \textsc{BuildRing} protocol (Lemma~\ref{lemma:build_cycle_stabilization}) holds.
	Observe that introduction messages for shortcuts do not modify the variables $v.left, v.right$ and $v.ring$ for a subscriber $v \in V$.
	Implicit edges generated by configurations sent out by the supervisor $s$ are just merged with the existing explicit edges at the receiving subscriber $v$, since $v$ already stores the correct configuration.
\end{proof}

\begin{lemma}[Shortcut Closure] \label{lemma:closure:shortcuts}
	If the explicit edges in $SR(n)$ already form a supervised skip ring, then the set $E_S$ does not get modified anymore, if no subscriber joins or leaves the system.
\end{lemma}

\begin{proof}
	Note that shortcuts are only modified in \textsc{IntroduceShortcut} or (Algorithm~\ref{algorithm:subscriber}, line~\ref{algorithm:subscriber:algline:introduce_shortcuts}).
	\textsc{IntroduceShortcut} is only called to introduce a node $u$ to some shortcut $w$, which already exists, since no node generates an introduction message for two nodes that are not allowed to be connected by a shortcut, as one can easily see via induction.
\end{proof}

By combining Lemmas~\ref{lemma:closure:supervisor},~\ref{lemma:closure:domain_cycle} and~\ref{lemma:closure:shortcuts}, we obtain Theorem~\ref{theorem:network_closure}.
\end{proof}

Furthermore, we can show that the delivery of publications is done in a self-stabilizing manner.

\begin{theorem}[Publication Convergence]\label{theorem:publication_convergence}
	Consider an initially weakly connected graph $G = (V, E_R \cup E_S)$ and assume that there are publications $P \subset \mathcal P^*$ in the system, stored at arbitrary subscribers $v \in V$.
	Then eventually all subscribers store a Patricia trie consisting of all publications in $P$.
\end{theorem}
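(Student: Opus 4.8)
The plan is to leverage the network convergence result (Theorem~\ref{theorem:network_convergence}) so that we may assume the subscribers eventually form a static, correctly built skip ring $SR(n)$ with a fixed cyclic ordering of ring edges in $E_R$. Once the ring is stable, the publication protocol operates on a fixed topology, and it suffices to argue that the periodic \textsc{CheckTrie} exchanges between ring neighbors drive all Patricia tries to a common value equal to the union of all publications in $P$. First I would argue a \emph{monotonicity / no-loss} invariant: no action of the publish protocol ever removes a publication from a subscriber's trie (cases (i)--(iii) only ever \emph{insert} publications via \textsc{CheckAndPublish}), so the multiset of publications stored at each subscriber is non-decreasing over time. Combined with the finiteness of $P$ (and the collision-resistance of $\bar h_m$, which guarantees distinct publications occupy distinct leaf labels), this means each subscriber's trie eventually stabilizes to some fixed set, and the global configuration reaches a fixed point.

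The core of the argument is then to show that the only fixed point consistent with weak connectivity is the one in which every trie contains all of $P$. I would define a potential $\Psi = \sum_{v \in V} |P \setminus (\text{publications stored at } v)|$, i.e.\ the total number of missing publications summed over all subscribers, and show $\Psi$ is non-increasing and strictly decreases as long as $\Psi > 0$. The key local lemma is a \emph{correctness of detection} claim for the three cases: whenever two ring neighbors $u, v$ hold differing trie contents, a \textsc{CheckTrie} request rooted at the root node will, through the recursive descent of case (ii), reach a node where case (iii) applies, at which point the subscriber lacking a publication correctly computes the missing label prefix and issues a \textsc{CheckAndPublish} request that transfers at least one genuinely missing publication. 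Here the hash structure does the work: since $t.hash$ is defined bottom-up as a collision-resistant hash of the child hashes (a Merkle-style invariant), $t_u.hash = t_v.hash$ holds \emph{iff} the two subtries store identical publication sets, so the recursion never terminates prematurely at a node whose subtries genuinely differ, and never flags a spurious difference. I would prove this detection property by induction on trie depth, using the definition of the inner-node label as the longest common prefix of its children's labels to ensure the structural alignment needed for case (iii)'s prefix computation.

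With the local transfer lemma in hand, the global argument is a standard connectivity-plus-potential step: suppose $\Psi > 0$ at the fixed point. Then some publication $p \in P$ is held by some subscriber but missing at another; by weak connectivity of $G$ and the fact that the stabilized ring $E_R$ is connected, there is a ring edge $(u,v)$ with $p$ present at one endpoint and absent at the other. By \emph{fair message receipt} and \emph{weakly fair action execution}, the subscriber on that edge with missing content eventually issues its randomized \textsc{CheckTrie} to the other endpoint (each neighbor is chosen with positive probability every timeout), the detection lemma guarantees the resulting exchange delivers $p$, and $\Psi$ strictly decreases --- contradicting the fixed-point assumption. Hence $\Psi = 0$ eventually, which is exactly the statement that every subscriber stores all of $P$.

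I expect the main obstacle to be the \emph{detection correctness} for case (iii): one must verify that when $t_v$ has no counterpart in $u.T$, the node $c \in u.T$ with minimal-length label extending $t_v.label$ is computed correctly and that the prefix $p = t_v.label \circ (1-b_1)$ indeed isolates publications missing at $u$ without ever skipping a publication present at $v$. The subtlety is that the two tries may have \emph{differently shaped} compressed edges along the same prefix (a Patricia trie compresses chains, so corresponding paths can branch at different depths), and one must confirm the recursive \textsc{CheckTrie}/\textsc{CheckAndPublish} scheme handles this structural mismatch in both directions --- i.e.\ that the process is symmetric enough that, regardless of which endpoint initiates, every genuine difference is eventually surfaced. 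Establishing this rigorously requires a careful case analysis of how two Patricia tries over the same key space can diverge, which is where I would concentrate the bulk of the technical effort.
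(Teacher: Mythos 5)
Your proposal is correct and follows the paper's overall plan---invoke Theorem~\ref{theorem:network_convergence} to fix the ring, use the no-deletion invariant, and drive a potential to zero via a detection lemma for the \textsc{CheckTrie}/\textsc{CheckAndPublish} recursion---but your potential function is genuinely different. The paper uses the edge-based $\Phi=\sum_{(u,v)\in E_R}\phi(u,v)$ with $\phi(u,v)=|(P_u\cup P_v)\setminus P_v|$, and therefore needs a separate lemma (Lemma~\ref{lemma:publish_convergence_2}) to show $\Phi$ never increases: when $u$ gains publications from a third subscriber $w$, the increase in $\phi(u,v)$ must be booked against an equal decrease in $\phi(w,u)$, an accounting that silently depends on publications arriving only over ring edges. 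Your node-based $\Psi=\sum_{v\in V}|P\setminus P_v|$ is non-increasing immediately from the no-loss invariant, so that lemma disappears entirely; conversely, what the paper's pairwise potential buys is that progress is certified locally at a single ring edge, matching the structure of its Lemma~\ref{lemma:publish_convergence_3}, which---like your fixed-point contradiction---finds a ring edge across which some publication is present at one endpoint and absent at the other and exhibits a computation (random neighbor choice plus fairness, hence convergence in the probabilistic sense in both proofs) that transfers it. Both arguments then rest on the same technical core, which you correctly isolate as the main remaining work: collision resistance keeps the hash-guided descent from terminating at genuinely different subtries, the descent eventually reaches a node absent from the other trie, and the case-(iii) prefix $t.label\circ(1-b_1)$ exactly captures a missing publication set despite the two Patricia tries compressing chains differently; the initiator/parity concern you flag is handled in the paper by its explicit case split on whether the root-to-$t$ path length $k$ is odd or even.
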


\begin{proof}
	First note that in our protocol, no publish messages are deleted from the Patricia tries, i.e., once a subscriber $v \in V$ has a publication $p \in P$ stored in its Patricia trie $v.T$, it will never remove $p$ from $v.T$.
	Therefore, we apply Theorem~\ref{theorem:network_convergence} and assume that the explicit edges in $G$ already form a supervised skip ring.
	For a subscriber $u \in V$ let $P_u \subseteq P$ be the set of publication stored in the leaf nodes of $u.T$.
	We define the potential of a pair $(u,v)$ of subscribers by \[\phi(u,v) = |P_{u,v} \setminus P_v|,\] where $P_{u,v}$ is a shorthand expression for $P_u \cup P_v$.
	Note that $\phi(u,v) = \phi(v,u)$ does not hold in general, since intuitively speaking, $\phi(u,v)$ returns the number of publications stored in $u.T$ that are missing in $v.T$.
	The potential over all subscribers is then defined as \[\Phi = \sum_{(u,v) \in E_R} \phi(u,v).\]
	It is easy to see that the following Corollary holds:
	
	\begin{corollary}\label{corollary:publish_convergence_1}
		$\Phi \geq 0$ at any point in time and $\Phi = 0 \Leftrightarrow P_u = P$ for all subscribers $u \in V$.
	\end{corollary}
	
	We obtain convergence, if we can show that $\Phi$ is monotonically decreasing and eventually $\Phi = 0$.
	We show that $\Phi$ is monotonically decreasing with the following claim:
	\begin{lemma}\label{lemma:publish_convergence_2}
		$\Phi$ never increases.
	\end{lemma}
	\begin{proof}
		By definition, $\Phi$ only increases, if there is $(u,v) \in E_R$ for which $\phi(u,v)$ increases.
		This implies that there is a subscriber $w \in V$, $u \neq w \neq v$ that has sent $u$ a set of publications $P_w \subset P$ via a \textsc{Publish} request that are not contained in $v.T$ yet.
		Then $\phi'(u,v) = |(P_{u,v} \cup P_w) \setminus P_v| = \phi(u,v) + |P_w|$.
		But this also implies that $\phi(w,u)$ decreases by $|P_w|$, because $\phi'(w,u) = |P_{w,u} \setminus (P_u \cup P_w)| = \phi(w,u) - |P_w|$, leaving $\Phi$ at the same value as before.
	\end{proof}

	To complete the proof, we still need to show that eventually $\Phi = 0$.
	
	\begin{lemma}\label{lemma:publish_convergence_3}
		As long as there exists $\phi(u,v) > 0$ for some edge $(u,v) \in E_R$, there is a computation after which $\phi(u,v)$ has decreased.
	\end{lemma}
	\begin{proof}
		Let $\phi(u,v) > 0$ for two subscribers $u, v \in V$ that are connected via a ring edge $(u,v) \in E_R$.
		Let $p \in u.T$ be the node with minimal label length $|p.label|$, for which it holds that all publications stored in the leafs of the subtrie of $p$ are missing in $T_v$.
		Note that $p.label$ is a prefix for all those publications.
		Obviously, such a node always exists, when there is one or more publication missing in $v.T$.
		Assume to the contrary that we are in state $s$ with $\phi(u,v) > 0$ and for all possible computations, $\phi(u,v)$ does not decrease.
		We state a computation in which $u$ delivers all publications with prefix $p.label$ to $v$, resulting in a decrease of $\phi(u,v)$.
		
		W.l.o.g. consider the node $t \in u.T$ with label length $|t.label|$ minimal, for which $t.label$ is a prefix of $p.label$ and for which there does not exist a node in $v.T$ with label $t.label$.
		Such a node exists, because in case there is no inner node in $u.T$ with these properties, we can choose $t = p$.
		Note that we consider $p.label$ to be a prefix of itself.
		Now we look at the path $(r_u = t_1,\ldots t_k = t)$ from the root node $r_u$ of $u.T$ to $t$.
		It holds that for all nodes $t_i$ with $i \neq k$ on this path, there exists a node $t_i' \in v.T$ with the same label as $t_i$, i.e., $t_i'.label = t_i.label$.
		Otherwise our choice for $t$ would be wrong, since $|t.label|$ is not minimal.
		Since $h$ is a collision-resistant hash function for the hash values of the nodes, we have for $i \in \{1,\ldots,k-1\}$ that $t_i.hash \neq t_i'.hash$.
		We prove the following claim:
		\begin{claim}
			Eventually, subscriber $u$ sends a \Call{CheckTrie}{$u$, $t$} request to $v$.
		\end{claim}
	
		\begin{proof}
			Consider the path $(r_u = t_1,\ldots, t_k = t)$ from above from the root node $r_u$ of $u.T$ to $t$.
			Assume that $k$ is odd.
			By definition of our protocol, $u$ will eventually send a \Call{CheckTrie}{$u$, $r_u$} request to $v$.
			Since the root hashes are not equal, $v$ sends a \Call{CheckTrie}{$v$, $t_2$} request to $u$.
			Since we assumed that $t_i.hash \neq t_i'.hash$ for all $i \in \{1,\ldots,k-1\}$, this chain continues with $u$ sending \Call{CheckTrie}{$u$, $t_j$} requests to $v$, $j$ odd, until $u$ sends a \Call{CheckTrie}{$u$, $t_k$} request to $v$, which proves the lemma.
			The case where $k$ is even works analogously when starting at subscriber $v$.
		\end{proof}
	
		Applying the claim above, we now assume that $v$ has received a \Call{CheckTrie}{$v$, $t$} request.
		By our initial assumptions for $t$ it holds that there is no node with label $t.label$ contained in $v.T$.
		Thus, $v$ searches for a node $c \in v.T$ with label $c.label = (t.label \circ b_1 \circ \ldots \circ b_k)$ of minimum length and responds to $u$ with a \Call{CheckAndPublish}{$v$, $c$, $p'$} request.
		Here, $p' = t.label \circ (1-b_1)$ if $c$ exists, otherwise $p' = t.label$.
	
		\begin{claim}
			$p' = p.label$.
		\end{claim}
	
		\begin{proof}
			For $p' = t.label$, we know that there does not exist a node with a label that has $t.label$ as a prefix.
			Hence, all publications with prefix $t.label$ are missing at $v.T$, implying $t.label = p.label$, because we chose $p.label$ to be of minimal length.
			For $p' = t.label \circ (1-b_1)$, we know because of the existence of $c \in v.T$ and the non-existence of a node with label $t.label$ in $v.T$ that there is no node with label $t.label \circ (1-b_1)$ stored in $v.T$.
			Thus, all publications with prefix $t.label \circ (1-b_1)$ are missing in $v.T$.
			Since we chose $p.label$ to be of minimal length we get $t.label \circ (1-b_1) = p.label$.
		\end{proof}

		Subscriber $u$ responds to the \textsc{CheckAndPublish} request by sending all publications $P \subset \mathcal P^*$ to $v$ that have prefix $p'$.
		Since $p' = l_p$, $\phi(u,v)$ decreases and the lemma follows.
	\end{proof}
	
	By combining Corollary~\ref{corollary:publish_convergence_1}, Lemma~\ref{lemma:publish_convergence_2} and Lemma~\ref{lemma:publish_convergence_3} we proved the theorem.
\end{proof}

Finally we also show \emph{convergence} for publications.

\begin{theorem}[Publication Closure]\label{theorem:publication_closure}
	Consider a stable supervised skip ring $SR(n)$ and assume that all subscribers store the exact same Patricia trie containing publications $P \subset \mathcal P^*$.
	Then no Patricia trie is modified by a subscriber as long as no subscriber issues a publish request and no further subscriber joins the system.
\end{theorem}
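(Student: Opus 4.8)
The plan is to establish closure by showing that none of the three kinds of messages that circulate in a stable publish-subscribe system can cause any subscriber to modify its Patricia trie, under the stated assumption that no new publications are issued and no subscriber joins. First I would recall that, by Theorem~\ref{theorem:network_closure}, the underlying skip ring $SR(n)$ remains stable, so the edge set $E_R \cup E_S$ is preserved and the flooding/publish protocols always operate over the same fixed topology. The only way a Patricia trie $v.T$ gets modified is through the insertion of a publication, which happens either via a \textsc{PublishNew} message (flooding) or via the delivery of publications triggered by a \textsc{CheckAndPublish} request.

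The key steps proceed by case analysis over the message types. For the flooding protocol, I would argue that any \textsc{PublishNew}$(p)$ message still in transit necessarily carries a publication $p \in P$, since no new publish request is issued and $P$ is already stored everywhere; upon receipt, every subscriber $v$ finds $p$ already in $v.T$ and drops the message without modification. For the core self-stabilizing protocol, I would analyze the \textsc{CheckTrie} exchange: since all tries are identical, for any \textsc{CheckTrie}$(u, t_u)$ request the receiver $v$ locates a node $t_v \in v.T$ with $t_v.label = t_u.label$ and, because the tries coincide, finds $t_u.hash = t_v.hash$. By case~(i) of the publish protocol this terminates the chain with no response and no modification, so case~(iii) (which is the only branch that ever emits a \textsc{CheckAndPublish} request and thereby delivers publications) is never reached. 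Hence no \textsc{CheckAndPublish} message is ever generated, and no publication is ever delivered or inserted.

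The main obstacle I anticipate is handling messages that may already be in flight in the initial (legitimate) state rather than being freshly generated. Even though the tries are identical, a channel could contain a leftover \textsc{CheckTrie} or \textsc{CheckAndPublish} request from before stabilization was declared; I would need to argue that processing such a residual message still leaves every trie unchanged. For a residual \textsc{CheckAndPublish}$(u, c, p')$ request, the delivered publications all have prefix $p'$ and, since $P_u = P_v = P$, these are already present in the receiver's trie, so insertion is a no-op. For a residual \textsc{CheckTrie} request the hash-equality argument above applies verbatim and kills the chain. I would conclude by observing that, since every incoming message is thus absorbed without altering any trie and no message capable of inserting a genuinely new publication is ever produced, the common Patricia trie is preserved at every subscriber, establishing closure.
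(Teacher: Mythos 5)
Your proposal is correct, and its core argument is the same as the paper's: since all subscribers store identical Patricia tries, the root hashes compared upon a \textsc{CheckTrie} request are equal, so case~(i) applies, the exchange terminates with no response, and no \textsc{CheckAndPublish} (hence no publication delivery) is ever triggered. The difference is one of completeness rather than of route. The paper's proof is three sentences long: it simply \emph{asserts} that "the only type of request regarding publications is the periodic \textsc{CheckTrie} request" and then applies the hash-equality argument. You instead prove that assertion by case analysis: \textsc{PublishNew} flooding messages carry only publications $p \in P$ (no new publish requests are issued) and are dropped on receipt since $p$ is already stored; case~(iii) of the protocol is never reached, so no \textsc{CheckAndPublish} is freshly generated; and residual in-flight \textsc{CheckTrie} or \textsc{CheckAndPublish} messages are absorbed as no-ops because any publications they cause to be delivered lie in $P$ and insertion of an already-present publication does not modify the trie. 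This last point is a genuine strengthening: in the paper's model the channel contents are part of the system state, so a closure argument that ignores legitimately generated messages still in transit (including flooding traffic from Section~\ref{app:publish:flooding}) is implicitly assuming empty channels, an assumption the theorem statement never makes. Your version closes that gap; the paper's version buys brevity at the cost of leaving it implicit. Also, invoking Theorem~\ref{theorem:network_closure} to fix the topology before the case analysis, as you do, is a sound and useful preliminary step that the paper likewise leaves tacit.
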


\begin{proof}
	In this stable network the only type of request regarding publications is the periodic \textsc{CheckTrie} request.
	Thus, at the receiver, the hash values of the Patricia tries' root nodes are compared.
	But since all nodes store the same Patricia trie, both hashes are equal, resulting in no further message being sent out as an answer to the \textsc{CheckTrie} request.
\end{proof}

\section{Conclusion}
In this paper we proposed a self-stabilizing protocol for the supervised skip ring, which can be extended to a self-stabilizing publish-subscribe system.
The system is able to effectively handle node insertions/deletions and is furthermore able to efficiently recover from node failures.

As parts of our protocol are randomized (subscriber periodically call the supervisor with a certain probability), one may investigate, if there are deterministic self-stabilizing protocols for supervised overlay networks. 
These can probably established by using a token-passing scheme.
Depending on the rate of join/leave requests, the supervisor may adjust the speed of the token.
Then the space overhead for the supervisor could be reduced as it only needs to know the number of subscribers $n$.
However, it may be harder to prove the self-stabilization property, as the token-passing scheme has to be able to deal with multiple connected components, so we leave this to future works.

\bibliography{literature}

\begin{appendix}

\section{Pseudocode} \label{app:code}
In order to not blow up the pseudocode, we may use a tuple $t_1 = (label_{t_1}, v_{t_1})$ for comparisons regarding the label or for remote calls, i.e., we may write $t_1 < t_2$ instead of $label_{t_1} < label_{t_2}$ when comparing labels, or in case we want to send a message to $v_{t_1}$, we may just write $t_1 \gets$ \textsc{RemoteCall} instead of $v_{t_1} \gets$ \textsc{RemoteCall}.
This prevents us from always having to formally define the contents of each tuple $t \in \{0,1\}^* \times V$ in the code.

\begin{algorithm*}[ht]
\caption{\textsc{BuildList} protocol w.r.t. corrupted IDs executed by $u \in V$}
\label{algorithm:build_list}
\begin{algorithmic}[1]
\Procedure{BuildListTimeout}{}
	\If{$u.left \leq u.label$} \Comment{Analogously for $u.right \geq u.label$}
		\State $u.left \gets$ \Call{Check}{$u$, $u.left$, $LIN$}
	\Else
			\State $u \gets$ \Call{Linearize}{$u.left$}
			\State $u.left \gets \perp$
	\EndIf
\EndProcedure
\State
\Procedure{Check}{$sender$, $label$, $flag \in \{CYC, LIN\}$}
	\If{$u.label \neq label$}
		\State $sender \gets$ \Call{Introduce}{$u$, $flag$}
	\Else
		\State \Call{Introduce}{$sender$, $flag$}
	\EndIf
\EndProcedure
\State
\Procedure{Linearize}{$v$}
\If{$u.label \neq \perp$}
	\If{$v = u.left \vee v = u.right$}
		\If{$v = u.left \wedge label_v \neq u.left$} \Comment{Analogously for $u.right$}
			\If{$label_v \leq u.label$}
				\State Replace label in $u.left$ with $label_v$
			\Else
				\State $u \gets$ \Call{Linearize}{$v$}
				\State $u.left \gets \perp$
			\EndIf
		\EndIf
	\Else
		\If{$label_v \leq u.left$} \Comment{Analogously for $u.right \leq label_v$}
			\State $u.left \gets$ \Call{Linearize}{$v$}
		\EndIf
		\If{$u.left < label_v \leq u.label$} \Comment{Analogously for $u.right$}
			\State $v \gets$ \Call{Linearize}{$u.left$}
			\State $u.left \gets v$
		\EndIf
	\EndIf
\Else
	\State $v \gets $ \Call{RemoveConnections}{$u$} \Comment{Lets $v$ remove $u$ from its local storage}
\EndIf
\EndProcedure
\end{algorithmic}
\end{algorithm*}

\begin{algorithm*}[ht]
\caption{Extended \textsc{BuildRing} protocol executed by $u \in V$}
\label{algorithm:build_cycle}
\begin{algorithmic}[1]
\Procedure{BuildRingTimeout}{}
	\If{$u.ring = \perp$}
		\If{$u.left = \perp \wedge u.right \neq \perp$}  \Comment{Analogously for $u.right = \perp \wedge u.left \neq \perp$}
			\State $u.right \gets$ \Call{Introduce}{$u$, $CYC$}
		\EndIf
	\ElsIf{$u.label = \perp$}
		\State $u.ring \gets$ \Call{RemoveConnections}{$u$} \Comment{Lets $u.ring$ delete $u$}
		\State $u.ring \gets \perp$
	\Else
		\If{$u.left \neq \perp \wedge u.ring > u.label$} \Comment{Analogously for $u.right$}
			\State $u.left \gets$ \Call{Introduce}{$u.ring$, $CYC$}
			\State $u.ring \gets \perp$
		\EndIf
		\If{$(u.left = \perp \wedge u.ring > u.label) \vee (u.right = \perp \wedge u.ring < u.label)$}
			\State $u.ring \gets$ \Call{Check}{$u$, $u.ring$, $CYC$}
		\EndIf
	\EndIf
	\State \Call{BuildListTimeout}{}
\EndProcedure
\State
\Procedure{Introduce}{$c = (label_v, v)$, $flag \in \{CYC, LIN\}$}
\If{$u.label \neq \perp$}
	\If{$u.ring = v \wedge u.ring \neq label_v$}
		\If{$(label_v < u.label \wedge u.ring < u.label) \vee (label_v > u.label \wedge u.ring > u.label)$}
			\State $u.ring \gets (label_v, v)$
		\Else
			\State $u \gets$ \Call{Introduce}{$(label_v, v)$}
			\State $u.ring \gets \perp$
		\EndIf
	\EndIf
	\If{$flag = CYC$}
		\If{$u.ring = \perp$}
			\If{$(label_v < u.label \wedge u.right = \perp)$}\Comment{Analogously for $u.left$}
				\State $u.ring \gets v$
			\EndIf
			\If{$label_v < u.label \wedge u.right \neq \perp$} \Comment{Analogously for $u.left$}
				\State $u.right \gets$ \Call{Introduce}{$v$, $CYC$}
			\EndIf
		\ElsIf{$(u.ring < u.label \wedge label_v < u.label) \vee (u.ring > u.label \wedge label_v > u.label)$}
			\State $u.ring = \argmax_{w \in \{c, u.ring\}}\{|w-u.label|\}$
			\State $w' \gets \{c, u.ring\} \setminus w$
			\State $u \gets$ \Call{Introduce}{$w'$, $LIN$}
			\State $w \gets$ \Call{Introduce}{$u.ring$, $LIN$}
		\Else
			\State $u \gets$ \Call{Introduce}{$v$, $LIN$}
			\State $u \gets$ \Call{Introduce}{$u.ring$, $LIN$}
			\State $u.ring \gets \perp$
		\EndIf
	\Else
		\State \Call{Linearize}{$v$} \Comment{Method from \textsc{BuildList} with corrupted node ids}
	\EndIf
\Else
	\State $v \gets $ \Call{RemoveConnections}{$u$} \Comment{Lets $v$ remove $u$ from its local storage}
\EndIf
\EndProcedure
\end{algorithmic}
\end{algorithm*}

\begin{algorithm*}[ht]
\caption{\textsc{BuildSR} protocol executed by the supervisor}
\label{algorithm:supervisor}
\begin{algorithmic}[1]
\Procedure{Timeout}{$true$} \Comment{Periodically executed!}
	\State \Call{CheckLabels}{} \label{algorithm:supervisor:algline:timeout_checkLabels}
	\State $next \gets next + 1 \mod n$
	\State Let $(label,v) \in database$ with $label = l(next)$
	\State \Call{GetConfiguration}{$v$} \label{algorithm:supervisor:algline:timeout_sendConfig}
\EndProcedure
\Procedure{Subscribe}{$v$}
	\If{$\forall (l,u) \in database: u \neq v$}
		\State $database \gets database \cup \{(l(n), v)\}$ \Comment{Update the database}
			\State Let $pred_v, succ_v \in database$ be the predecessor/successor of $v$ in the ring
			\State $v \gets$ \Call{SetData}{$pred_v$, $l(n)$, $succ_v$} \label{algorithm:supervisor:algline:subscribe:sendData}
	\Else
		\State \Call{GetConfiguration}{$v$} \Comment{Just send all of $v$'s connections to $v$} \label{algorithm:supervisor:algline:subscribe:alreadyInserted}
	\EndIf
\EndProcedure
\Procedure{Unsubscribe}{$v$}
	\State \Call{CheckMultipleCopies}{$v$}
	\If{$\exists (label_u,u) \in database: u = v$}
		\If{$n > 1 \wedge l^{-1}(label_u) \neq n-1$}
			\State Let $pred_v, succ_v \in database$ be the predecessor/successor of $v$ in the ring
			\State Let $(label_w,w) \in database$ with $label_w = l(n-1)$
			\State $database \gets database \cup \{(label_u, w)\} \setminus \{(label_u,v)\} \setminus \{(label_w,w)\}$
			\State $w \gets$ \Call{SetData}{$pred_v$, $label_u$, $succ_v$} \Comment{Replace $w$'s id}
		\Else
			\State $database \gets database \setminus \{(label_u,v)\}$ \Comment{Just update $database$}
		\EndIf
	\EndIf
	\State $v \gets$ \Call{SetData}{$\perp$, $\perp$, $\perp$} \Comment{Let $v$ delete its label and connections}
\EndProcedure
\Procedure{GetConfiguration}{$u$}
	\State \Call{CheckMultipleCopies}{$u$}
	\If{$\exists (label_v, v) \in database: v = u$}
			\State Let $pred_v, succ_v \in database$ be the predecessor/successor of $v$ in the ring
		\State $v \gets$ \Call{SetData}{$pred_v, (label_v, v), succ_v$}
	\Else
		\State $u \gets$ \Call{SetData}{$\perp$, $\perp$, $\perp$} \Comment{$u$ not part of the $database$}
	\EndIf
\EndProcedure
\Procedure{CheckMultipleCopies}{$v$}
	\State $rem \gets false$
	\ForAll{$(l,w) \in database$}
		\If{$rem = false \wedge w = v$}
			\State $rem \gets true$
		\ElsIf{$w = v$}
			\State $database \gets database \setminus \{(l,w)\}$
		\EndIf
	\EndFor
\EndProcedure
\Procedure{CheckLabels}{} \Comment{Only called in \textsc{Timeout}}
	\ForAll{$(l,v) \in database$}
		\If{$v = \perp$}
			\State $database \gets database \setminus \{(l,v)\}$ \Comment{Remove tuple with empty value} \label{algorithm:database_integrity:algline:checkLabels_remove_empty_tuple}
		\EndIf
	\EndFor
	\ForAll{$i \in \{0,\ldots,n-1\}$}
		\If{$\forall (l,v) \in database: l \neq l(i)$} \Comment{Tuple could not be found}\label{algorithm:database_integrity:algline:checkLabels_getli}
			\State Get the tuple $(l(j),w)$ with maximum $j > i$
			\State $database \gets database \setminus \{(l,\perp), (l(j),w)\} \cup \{(l(i),w)\}$ \Comment{Set label for $w$} \label{algorithm:database_integrity:algline:checkLabels_replace}
		\EndIf
	\EndFor
\EndProcedure
\end{algorithmic}
\end{algorithm*}

\begin{algorithm*}[ht]
\caption{\textsc{BuildSR} protocol executed by subscribers $u \in V$}
\label{algorithm:subscriber}
\begin{algorithmic}[1]
\Procedure{Timeout}{$true$} \Comment{Periodically executed}
	\State \Call{BuildRingTimeout}{}
	\State Check and update $u.shortcuts$, s.t. it contains the correct labels afterwards.\label{algorithm:subscriber:algline:check_shortcuts}
	\If{$u.label = \perp$} \Comment{As long as $u$ has no label: Subscribe to supervisor}
		\State $supervisor \gets$ \Call{Subscribe}{$u$} \label{algorithm:subscriber:algline:issue_subscribe}
	\Else \Comment{Just ask the supervisor for the correct configuration}
		\If{$u$ thinks that $u.label$ is minimal}
			\State Let $k := |u.label|$
			\State With probability $1/2$: $supervisor \gets$ \Call{GetConfiguration}{$u$} \label{algorithm:subscriber:algline:notify_supervisor_closest_to_zero}
		\Else
			\State With prob. $1/(2^{k}\cdot k^2)$: $supervisor \gets$ \Call{GetConfiguration}{$u$} \label{algorithm:subscriber:algline:notify_supervisor}
		\EndIf
	\EndIf	
	\If{$\exists (l_u, u), (l_w, w) \in v.shortcuts$ on level $k = |v.label|$}
		\State $u \gets$ \Call{IntroduceShortcut}{$l_w$, $w$}
		\State $w \gets$ \Call{IntroduceShortcut}{$l_u$, $u$}
	\EndIf
\EndProcedure
\State
\Procedure{SetData}{$pred$, $label$, $succ$}
	\State $u.label \gets label$
	\If{$u.left \neq \perp \wedge (pred = \perp \vee |u.left - u.label| < |pred - u.label|)$}
		\State $supervisor \gets$ \Call{GetConfiguration}{$u.left$} \Comment{Analog. for $u.right, u.ring$} \label{algorithm:subscriber:algline:get_config_for_neighbor}
	\EndIf
	\State Update $u.left, u.right, u.ring$ w.r.t. $pred, succ$ and $label$
\EndProcedure
\State
\Procedure{IntroduceShortcut}{$l$, $v$} \label{algorithm:subscriber:algline:introduce_shortcuts}
	\If{$\exists (l',v') \in u.shortcuts: l' = l$}
		\If{$v' \neq v$}
			\State $u.shortcuts \gets u.shortcuts \setminus \{(l',v')\}$ \Comment{Remove old shortcut for label $l$}
			\If{$v' \neq \perp$}
				\State \Call{Linearize}{$(l',v')$}
			\EndIf
		\EndIf
		\State $u.shortcuts \gets u.shortcuts \cup \{(l,v)\}$
	\Else
		\State \Call{Linearize}{$v$}
	\EndIf
\EndProcedure
\end{algorithmic}
\end{algorithm*}

\begin{algorithm*}[ht]
\caption{Self-Stabilizing publication protocol for a subscriber $u \in V$}
\label{algorithm:build_publish}
\begin{algorithmic}[1]
\Procedure{PublishTimeout}{}
	\State Choose a random node $v$ from $u.left, u.right, u.ring$
	\State Let $r$ be the root node of $T_u$, with label $l_r$ and hash $h_r$
	\State $v \gets$ \Call{CheckTrie}{$u$, $(l_r, h_r)$}\label{algorithm:build_publish:peridioc_request}
\EndProcedure
\State
\Procedure{Publish}{$P$}	
	\ForAll{$p \in P$}
		\If{$p$ not stored in the Patricia trie}
			\State \Call{Insert}{$p$, $T_u$} \Comment{Insert $p$ into $T_u$}
		\EndIf
	\EndFor
\EndProcedure
\State
\Procedure{CheckTrie}{$sender \in V$, $tuples \subset (\{0,1\}^*)^2$}
	\ForAll{$(l, h) \in tuples$}
		\State $v \gets$ \Call{SearchNode}{$l$, $T_u$} \Comment{Search for the node with label $l$ in $T_u$}
		\If{$v \neq \perp$}
			\If{$h_v \neq h \wedge v \text{ is an inner node of } T_u$} \Comment{If $h_v = h$, subtries are equal}
				\State Let $v_1, v_2$ be $v$'s child nodes
				\State $sender \gets$ \Call{CheckTrie}{$u$, $\{(l_{v_1}, h_{v_1})$, $(l_{v_2}, h_{v_2})\}$}
			\EndIf
		\Else
			\State Let $c \in T_u$ with minimal label $l_c = (l \circ c_1 \circ \ldots \circ c_k)$ for which $l$ is a prefix\label{algorithm:build_publish:cap_start}
			\If{$c \neq \perp$}
				\State $sender \gets$ \Call{CheckAndPublish}{$u$, $(l_c, h_c)$, $p = (l \circ (1-c_1))$}
			\Else
				\State $sender \gets$ \Call{CheckAndPublish}{$u$, $\emptyset$, $l$} \label{algorithm:build_publish:cap_end}
			\EndIf
		\EndIf
	\EndFor
\EndProcedure
\State
\Procedure{CheckAndPublish}{$sender \in V$, $tuples \subset (\{0,1\}^*)^2$, $pf \in \{0,1\}^*$}
	\State \Call{CheckTrie}{$sender$, $tuples$}
	\State $P\gets $ All publications with prefix $pf$ from $T_u$\label{algorithm:build_publish:collect_publications}
	\State $sender \gets$ \Call{Publish}{$P$}
\EndProcedure
\State
\Procedure{PublishNew}{$p \in  \mathcal P^k$}
	\If{$p$ not stored in the Patricia trie}
		\State \Call{Insert}{$p$, $T_u$} \Comment{Insert $p$ into $T_u$}
		\ForAll{$v \in \{u.left, u.right, u.ring\} \cup u.shortcuts$}
			\State $v \gets$ \Call{PublishNew}{$p$}
		\EndFor
	\EndIf
\EndProcedure
\end{algorithmic}
\end{algorithm*}

\end{appendix}
\end{document}